\def\marginpar#1{\ignorespaces}
\newtheorem{theorem}{Theorem}[section]
\newtheorem{lemma}[theorem]{Lemma}
\newtheorem{proposition}[theorem]{Proposition}
\newtheorem{corollary}[theorem]{Corollary}
\newtheorem{assump}[theorem]{Assumption}
\numberwithin{equation}{section}
\begin{document}
%\title[Exchange PoS]{Exchange in the Proof of Stake protocol \\-- a continuous-time control approach}
\title[Trading under PoS]{Trading under the Proof-of-Stake Protocol \\-- a Continuous-Time Control Approach}

\author[Wenpin Tang]{{Wenpin} Tang}
\address{Department of Industrial Engineering and Operations Research, Columbia University. 
} \email{wt2319@columbia.edu}

\author[David Yao]{David D.\ Yao}
\address{Department of Industrial Engineer and Operations Research, Columbia University. %Email: 
} \email{yao@columbia.edu}

\date{\today} 
\begin{abstract}

We develop a continuous-time control approach to optimal trading in a Proof-of-Stake (PoS) blockchain,
formulated as a consumption-investment problem that 
aims to strike the optimal balance between a participant's (or agent's) utility 
from holding/trading stakes and utility from consumption.
We present solutions via dynamic programming 
and the Hamilton-Jacobi-Bellman (HJB) equations.
When the utility functions are linear or convex, we derive close-form solutions and 
show that the bang-bang strategy is optimal (i.e., always buy or sell at full capacity).
Furthermore, we bring out the explicit connection between the 
rate of return in trading/holding stakes and the participant's risk-adjusted valuation of the stakes.
In particular, we show when a participant is risk-neutral or risk-seeking, corresponding to the  risk-adjusted valuation
being a martingale or a sub-martingale,
the optimal strategy must be to either buy all the time, sell all the time, or first buy then sell,
and with both buying and selling executed at full capacity.
%(iii) first purchase then sell, which echoes the rule of thumb that early investment pays off at a late time.
We also propose a risk-control version of the consumption-investment problem; 
%by adding a penalty term to control the level of stake holding so as to 
%reduce the level of concentration risk (Theorem \ref{thm:51}).
and for a special case, the ``stake-parity'' problem, 
%where the participant's
%hold is controlled at a level that tries to track the system-wide average.
%We give the general solution to the risk-controlled consumption-investment problem (Theorem \ref{thm:51}),
we show a mean-reverting strategy is optimal.
\end{abstract}

\maketitle

\textit{Key words}: Consumption-investment, Proof of Stake (PoS) protocol, cryptocurrency,
dynamic programming, HJB equations, continuous-time control, %exchange/trading economy, 
risk control. % transport equations. 

%\setcounter{tocdepth}{1}
%\tableofcontents
%------------------------------------------------------------------------------------------------
\section{Introduction}

\quad 
%Blockchain technology and its crypto-carriers have emerged and developed in the past decade.
As a digital exchange vehicle, blockchain technology has been successfully deployed in many applications 
including cryptocurrency \cite{Naka08}, 
healthcare \cite{Do19}, supply chain \cite{CTT20}, 
electoral voting \cite{W18}, and non-fungible tokens \cite{WL21}.
A blockchain is a growing chain of accounting records, called blocks,
which are jointly maintained by participants of the system using cryptography.
%The idea of blockchain is rooted from the work \cite{LSP82} on distributed consensus,
%where multiple machines use Byzantine fault tolerance and related algorithms
%to make consistent decisions in a mission-critical system.
%Since then, distributed consensus has been a critical component in many computing infrastructures 
%such as Google Wallet and Facebook Credit but only supports a limited number of permissioned personnels.
%
%\quad The new breakthrough came around 2008, 
%where Nakamoto \cite{Naka08} invented 
Consider for instance Bitcoin -- 
a peer to peer decentralized payment system.
In contrast to traditional payment processing networks, Bitcoin provides a permissionless environment
in which everyone is free to participate.
At the core of Bitcoin is the consensus protocol known as {\em Proof of Work} (PoW),
in which ``miners''  compete with each other by solving a hashing puzzle so as to validate
 an ever-growing log of transactions (the ``longest chain'') to update a distributed ledger;
and the miner who solves the puzzle first receives a reward (a number of coins).
% and validates a new block's addition to the blockchain.
Thus, while the competition is open to all participants, 
the chance of winning is proportional to a miner's computing power.

\quad Despite its popularity, the PoW protocol has some obvious drawbacks. 
Competition among miners has led to exploding levels of energy consumption in Bitcoin mining, 
\cite{Mora18, PS21}. 
%pointed out unsustainability of PoW blockchains
%and in particular, 
%energy used per transaction in Bitcoin mining is comparable to that consumed by an average US household for 2 months.
%The PoW blockchains are also criticized from economic perspectives.
%\cite{PW21} showed that the revenue or benefit of miners in a PoW blockchain is limited;
\cite{AC20, AW22, CK17} pointed out that PoW mining will lead to centralization,
violating the core tenet of decentralization.
%  of blockchain;
%\cite{CHL20} further indicated that centralization in a PoW blockchain 
%and will trigger higher hash rate.
% and thus increase significantly energy consumption. 
To solve the problem of energy efficiency, 
\cite{KN12, Wood14} introduced another consensus protocol -- {\em Proof of Stake} (PoS),
%which we proceed to explaining.
%
%\quad In the PoS protocol, 
which is a bidding mechanism to select a miner to validate the new block.
Participants who choose to join the bidding process are required to commit certain stakes (coins they own),
and the winning probability is proportional to the stakes committed. 
%For instance, if at some stage there are only three participants who commit $15$, $35$ and $50$ units of stakes,
%then the winning probability of these participants is $15\%$, $35\%$ and $50\%$ respectively.
Hence, a participant in a PoS blockchain is a ``bidder'', and %as opposed to a ``miner'' in a PoW blockchain
only the winning bidder becomes the miner who does the validation.
%In the sequel, bidders or miners in a PoS blockchain are simply called {\em participants}.
As yet the PoS protocol has not been as popular as PoW. However, it is catching up quickly, and
blockchain developers have strong incentives to switch from a PoW to a PoS ecosystem.
A prominent case in this direction is Ethereum 2.0,
where two parallel chains -- {\em Mainnet} (PoW) and {\em Beacon Chain} (PoS)
are expected soon to merge into one unified PoS blockchain \cite{DP22}.

\quad 
There has been an active stream of recent studies on PoS in the research literature;
and here we briefly mention several that relate closely to our study.
In \cite{Saleh21} it is shown that the PoS protocol is ``without waste''
from an economic standpoint.
Issues of stability and decentralization of the PoS protocol are examined in \cite{RS21, Tang22}. 
Specifically, it is shown in \cite{RS21} that for large owners of initial wealth in a PoS system
their shares of the total wealth will remain   
stable in the long run (i.e., proportions to the total wealth will remain constant),
 and hence the rich-get-richer phenomenon will not happen.
\cite{Tang22} further extends this to medium and small participants,
and reveals a phase transition in share stability among those different types of participants.
In \cite{RS21, TY22}, various aspects of the consumption-investment problem in PoS are examined, and 
certain conditions are identified under which a participant may have no incentive to trade with others.
This leads to the complementing question, given a participant does prefer to trade, what is the
 optimal trading strategy?

%Note that the control or game approach has been proposed in \cite{BBLL20, BBLL22, LRP19}, 
%however, all in the PoW (Bitcoin) framework. 

\quad 
Motivated by the above question, the objective of our study here is to
develop a continuous-time control approach to optimal trading in a PoS blockchain.
While the control (or game) approach has been proposed in previous studies \cite{BBLL20, BBLL22, LRP19},  they are all
for the PoW protocol. To the best of our knowledge,
ours is the first control model developed for optimal trading  under the PoS protocol.
%Thus, our paper contributes to both the literature on the economics of the PoS protocol,
%and that on applications of control theory to blockchain analysis.

\quad Here is an overview of our main results. 
We first formulate the {\em consumption-investment problem}, %(Section \ref{sc2}),
which aims to strike a balance between a participant's utility from holding/trading stakes and utility from consumption.
It takes the form of a deterministic control problem with the real-time trading strategy being the control variable. 
%A special case is the {\em stake-hoarding problem}, where a participant seeks to hold stakes exclusively. 
We start with a detailed analysis on a special case that we call the ``stake-hoarding'' problem (Proposition \ref{thm:31}),
where we bring out the possible scenario of monopoly. 
% corresponding to the constraint binding at optimality.
We then solve the general consumption-investment problem via dynamic programming 
and the Hamilton-Jacobi-Bellman (HJB) equations (Theorem \ref{thm:41}).

\quad When the utility functions are linear or convex, more explicit solutions can be obtained, and 
we show that the bang-bang control is optimal, i.e., always buy or sell at full capacity
(Propositions \ref{coro:class} and \ref{prop:convex}).
Along with the optimal trading strategy, we are also able to bring out the explicit connection between the 
rate of return in trading/holding stakes and the participant's risk-adjusted valuation of the stakes.
In other words, the participant's risk sensitivity is explicitly accounted for in the trading strategy.   
In particular, when a participant is risk-neutral or risk-seeking, corresponding to the  risk-adjusted valuation
being a martingale or a sub-martingale,
the optimal strategy must be either buy all the time, sell all the time, or first buy then sell
(with both buying and selling executed at full capacity).
%(iii) first purchase then sell, which echoes the rule of thumb that early investment pays off at a late time.

\quad Finally, we propose a risk control version of the consumption-investment problem, 
by adding a penalty term to control the level of stake holding so as to 
reduce the level of concentration risk (Theorem \ref{thm:51}).
A special case is a ``stake-parity'' problem, where the participant's
holding is controlled at a level that tries to track the system-wide average.
%We give the general solution to the risk-controlled consumption-investment problem (Theorem \ref{thm:51}),
We show that the ``mean-reverting'' strategy is the optimal solution to the stake-parity problem (Proposition \ref{prop:spp}).

%\quad Let us emphasize that the focus of this paper is on an individual participant's trading strategy in a PoS protocol,
%which we formulate as a deterministic control problem.
%This is different from \cite{BBLL20, BBLL22, LRP19}, 
%where trading is considered as a stochastic game in the PoW setting. 
%One reason for this, as pointed out in \cite{BN19, Tang22}, is that
%though both the PoW and the PoS protocols rely on randomness, 
%the randomness in the PoW protocol is external to the stake;
%while that in the PoS protocol comes from the stake itself.
%We also mention the work \cite{CLW21},
%where the price of stakes is determined by the trading strategies of the participants.
%This can be regarded as an inverse procedure to our problem.
%To conclude,
%it is possible to study the interactions between the participants,
%and formulate the problem of trading in a PoS blockchain as a deterministic or a stochastic game.
%The issues such as game, equilibrium, social
%welfare, and the inclusion of a trusted third party or market maker will be the focus of a follow-up study.

\medskip

\quad The rest of the paper is organized as follows. 
Section \ref{sc2} details the %describe the PoS model in continuous time,
formulation of the consumption-investment problem under the PoS protocol.
% as a  continuous-time control problem.
Section \ref{sc3} presents the optimal solution to the problem,
and Section \ref{sc:linear} focuses on the special case of linear and convex utility functions. %studies the consumption-investment problem.
Section \ref{sc5} presents extensions to risk-control objectives.  
%we discuss game aspects of the problem,and also take risk control into consideration.
Concluding remarks are summarized in Section \ref{sc6}.

%------------------------------------------------------------------------------------------------
%\section{Model description and problem setup}
\section{Model Formulation}
\label{sc2}

%\quad 
%This section introduces the problem of trading under the PoS protocol in continuous time.
%As explained in the introduction, the core of the PoS protocol are possibly time-dependent P\'olya urn models.
%Here we present a continuous analog, 
%which models the volume of stakes that each participant owns in the PoS protocol.

\quad This section introduces the problem of trading under the PoS protocol in continuous time, and formulate
a control model to solve the problem.
%As explained in the introduction, the core of the PoS protocol are possibly time-dependent P\'olya urn models.
%Here we present a continuous analog, 
%which models the volume of stakes that each participant owns in the PoS protocol.
%
%\quad Below we 
First, collected below are some conventions that will be used throughout this paper.
\begin{itemize}[label = {--}, itemsep = 3 pt]
\item
$\mathbb{R}$ denotes the set of real numbers, 
and $\mathbb{R}_{+}$ denotes the set of nonnegative real numbers.
\item
For $x, y \in \mathbb{R}$, 
$x \wedge y$ denotes the smaller number of $x$ and $y$; 
$x \vee y$ denotes the larger number of $x$ and $y$.
\item
The symbol $x = o(y)$ means $\frac{x}{y}$ decays towards zero as $y \to \infty$.
\item
For a random variable $X$, $\mathbb{E}(X)$ denotes the expectation of $X$.
\item
Let $\Omega$ be a subset of $\mathbb{R}$.
A function $f \in \mathcal{C}^k(\Omega)$ if it is $k$-time continuously differentiable in $\Omega$.
\item
For $f \in \mathcal{C}^1([0,T])$, $f'(t)$ denotes the derivative of $f$. 
For $f \in \mathcal{C}^1([0,T] \times \Omega)$,
$\partial_t f$ (resp. $\partial_x f$) denotes the partial derivative of $f$ with respect to $t$ (resp. $x$).
\end{itemize}

\smallskip

\quad Time is continuous, indexed by $t\in [0,T]$,  for a fixed $T > 0$ representing the length of a finite horizon. 
Let $\{N(t), \, 0 \le t \le T\}$ (with $N(0):= N$) denote the process of the total volume of stakes, which are issued 
over time by the PoS protocol,
and can either be deterministic or stochastic.
For ease of presentation,
we consider a deterministic process $N(t)$, which is increasing in time and sufficiently smooth,
with the derivative $N'(t)$ representing the instantaneous rate of ``reward'' --- additional stakes (or ``coins'') 
injected into the system specified (exogenously) by the PoS protocol.
For instance, we will consider below, as a special case, the process $N(t)$ of a polynomial form: 
\begin{equation}
\label{eq:Nal}
N_\alpha(t) = (N^{\frac{1}{\alpha}} + t)^\alpha, \qquad t \ge 0.
\end{equation}
Then,  $N'_\alpha(t) = \alpha (N^{\frac{1}{\alpha}} + t)^{\alpha-1}$, 
and $N''_{\alpha}(t) = \alpha (\alpha-1)(N^{\frac{1}{\alpha}} + t)^{\alpha-2}$, 
so the parametric family \eqref{eq:Nal} covers different rewarding schemes according to the values of $\alpha$.
\begin{itemize}[itemsep = 3 pt]
\item
For $0< \alpha < 1$, we have $N''_{\alpha}(t) < 0$ so the process $N_{\alpha}(t)$ corresponds to a decreasing reward (e.g. Bitcoin);
\item
For $\alpha = 1$, the process $N_{1}(t) = N + t$ gives a rate one constant reward (e.g. Blackcoin); 
\item
For $\alpha > 1$, we get $N''_{\alpha}(t) > 0$ and hence, the process $N_{\alpha}(t)$ amounts to an increasing reward (e.g. EOS).
\end{itemize}

\quad Let $K \ge 2$ denote the total number of participants in the system, who are indexed by $k\in [K]:= \{1,\ldots, K\}$.
For each participant $k$, let $\{X_k(t), \, 0 \le t \le T\}$ (with $X_k(0) = x_k$) denote 
the process of the number of stakes that participant $k$ holds,
with $X_k(t) \ge 0$ and $\sum_{k = 1}^K X_k(t) = N(t)$ for all $t\in [0,T]$.
In the (discrete-time) PoS protocol, 
in each round of the bidding process,
individual participants commit stakes so as to be selected to validate the block and receive a reward;
 and the winning probability is $X_k(t)/N(t)$ for participant $k$, i.e., proportional to the number of stakes committed. 
(For instance, each round in Ethereum takes about $10$ seconds, corresponding to the block-generation time \cite{BV14}.)
For our continuous-time PoS model here, 
in which the time required for each round of voting is ``infinitesimal," 
imagine there are $M$ rounds of bidding during any given time interval $[t, t + \Delta t]$. 
In each round participant $k$ gets either some stake(s) or nothing; so the average total number of stakes $k$ will get over the $M$ rounds is
(by law of large numbers when $M$ is large),
%and the participant will get roughly 
\begin{equation*}
\underbrace{\frac{X_k(t)}{N(t)}  \frac{N'(t) \Delta t}{M}}_{\tiny \mbox{average number of stakes in each round}} \times \underbrace{M}_{\tiny \mbox{number of rounds}} = \quad\frac{X_k(t)}{N(t)} N'(t) \Delta t . %\, \mbox{ stakes}.
\end{equation*}
%in $M$ rounds during $[t, t + \Delta t]$ ,
Hence, replacing $\Delta t$ by the infinitesimal $dt$, we know participant $k$ will receive (on average) $\frac{X_k(t)}{N(t)}N'(t) dt$ stakes, 
where $\frac{X_k(t)}{N(t)}$ is $k$'s winning probability, 
and $N'(t)  dt$ is the reward issued by the blockchain in $[t, t+dt]$.

\quad %Now we allow 
Participants are allowed to trade (buy or sell) their stakes.
%Given a trading strategy $\{\nu_k(t), \, 0 \le t \le T\}$, 
Participant $k$ will buy $\nu_k(t) dt$ stakes in $[t, t+dt]$ if $\nu_k(t) > 0$,
and sell $-\nu_k(t) dt$ stakes if $\nu_k(t) < 0$.
This leads to the following dynamics of participant $k$'s stakes under trading:
%\begin{equation*}
%dX_k(t) = \nu_{k}(t) dt + \frac{X_k(t)}{N(t)} dN(t) \quad \mbox{for } 0 \le t \le \tau_k \wedge \max_{j \ne k} \tau_j \wedge T:=\mathcal{T}_k,
%\end{equation*}
%or equivalently, 
\begin{equation}
\label{eq:Xnu}
X'_k(t) = \nu_{k}(t) + \frac{N'(t)}{N(t)} X_k(t) \quad \mbox{for } 0 \le t \le \tau_k  \wedge T:= \mathcal{T}_k,
\end{equation}
where $\tau_k: = \inf\{t>0: X_k(t) = 0\}$ is the first time at which the process $X_k(t)$ reaches zero.
%It is clear that the trading process will stop if any participant gets all available stakes:
It is reasonable to stop the trading process if a participant runs out of stakes, or gets all available stakes:
\begin{itemize}[itemsep = 3 pt]
\item
If $\mathcal{T}_k = \tau_k$, then participant $k$ liquidates all his stakes by time $\tau_k$, and $X_k(\mathcal{T}_k) = 0$;
\item
If $\mathcal{T}_k = \max_{j \ne k} \tau_j$, then participant $k$ gets all issued stakes by time $\max_{j \ne k} \tau_j$,
and hence $X_k(\mathcal{T}_k) = N(\mathcal{T}_k)$.
\end{itemize}
We set $X_k(t) = X_k(\mathcal{T}_k)$ for $t > \mathcal{T}_k$.

\quad The problem is for each participant $k$ to decide how to trade stakes with others under the PoS protocol.
%The idea here, is to view the the process of stakes $X_k(t)$ as the state variable,
%and the trading strategy $\nu_k(t)$ as the control variable 
%in a suitable optimal control problem.
%There are many ways to formulate the optimal control problem to decide on $\nu_k(t)$,
%and we will study the  stake pursuing problem, the consumption-investment problem, 
%and the stake parity problem in this work.
%
%\quad 
%Specifically, we consider the consumption-investment problem,
%which is motivated by \cite{RS21, TY22}, which study individual preference in a PoS economy.
Let $\{P(t), \, 0 \le t \le T\}$ be the price process of each (unit of) stake,
which is a stochastic process assumed to be independent of the dynamics in \eqref{eq:Xnu}.
(This assumption has appeared in recent studies (e.g., \cite{RS21}), and is somehow a reflection of the reality that  
  the crypto price tends to be affected by market shocks such as macroeconomics, geopolitics, breaking news, etc much more than by trading activities.)
%Denote $\{\mathcal{F}_t, \, t \ge 0\}$ to be the filtration of $P(t)$,
%and we require that the trading strategy $\nu_k(t)$ be $\mathcal{F}_t$-adapted.
Here, the price $P(t)$ of each stake is measured in terms of an underlying risk-free asset (referred to as 
``cash'' for simplicity); and
let $b_{k}(t)$ denote the (units of) risk-free asset that participant $k$ holds at time $t$,
and let $r > 0$ denote the risk-free (interest) rate.
Also note that all $K$ participants are allowed to trade stakes (with cash) only internally among themselves,
whereas each participants can only exchange cash with an external source (say, a bank). 
%we are concerned with the effect of exchanging stakes to each individual,
%we only allow participants to trade stakes, but not risk-free assets between them.
%Hence, each participant has to trade risk-free asset with a third-party instead of trading that with
%another participant.

The decision for each participant $k$ at $t$ is hence a tuple $(\nu_k(t), b_k(t))$. 
%process of the pair $\{\nu_k(t), b_k(t)\}$. 
Let $\{c_k(t), \, 0 \le t \le T\}$ be the process of consumption, or cash flow of participant $k$, which follows the dynamics below: 
%which, over $[t,t+dt]$ is:
%so there is the budget constraint on $c_k(t)$:
 \begin{equation}
dc_k(t) = rb_k(t) dt -db_k(t) - P(t) \nu_k(t) dt,  \qquad %\mbox{for } 
0 \le t \le \mathcal{T}_k; \tag{C1}
 \end{equation}
with 
 \begin{equation}
b_k(0)  = 0, \quad b_k(t) \ge 0 \mbox{ for } 0 \le t \le \mathcal{T}_k, \quad 0 \le X_k(t) \le N(t) \mbox{ for } 0 \le t \le \mathcal{T}_k. \tag{C2}
 \end{equation}
Set $b_k(t) = b_k(\mathcal{T}_k)$ and $\nu_k(t) = 0$ for $t > \mathcal{T}_k$.

\quad 
In (C1),
if $d b_k(t) < 0$, the participant sells the risk-free asset to get cash either for buying stakes, or for consumption;
if $d b_k(t) > 0$, the participant adds more risk-free asset.
Thus, (C1) is a self-financing condition in which 
$rb_k(t)dt-db_k(t)$ is the net change (in value of the risk-free asset held) used to finance new stakes $P(t) \nu_k(t) dt$ and consumption $dc(t)$.
%the term $ rb_k(t)dt-db_k(t) $ represents the changes in value of holding the risk-free asset;
%while the term $-P(t) \nu_k(t) dt$ accounts for the amount used to trade stakes.
The requirements in (C2) are all in the spirit of disallowing shorting 
on either the risk free asset $b_k(t)$ or the stakes $X_k(t)$.
In some PoS blockchains, there is a minimum requirement for bidding (e.g. 32 ETHs for Ethereum).
In this case, we can impose a lower bound on the process $X_k(t)$, to prevent it from falling below this threshold.
The analysis will be similar.
We also require that the trading strategy be bounded:
%continuous (up to time $\mathcal{T}_k$) and 
there is $\overline{\nu}_k > 0$ such that
\begin{equation}
| \nu_k(t) | \le \overline{\nu}_k. \tag{C3}
 \end{equation}
The objective of participant $k$ is:
\begin{equation}
\label{eq:obj1}
\begin{aligned}
%\max_{\{\nu_k(t), b_k(t)\}} & 
\sup_{\{(\nu_k(t), b_k(t))\}} & J(\nu_k, b_k):=
\mathbb{E}\left\{ \int_0^{\mathcal{T}_k}e^{-\beta_k t} \left[dc_k(t) + \ell_k(X_k(t)) dt \right]  + e^{-\beta_k \mathcal{T}_k}  \left[b_k(\mathcal{T}_k) + h_k(X_k(\mathcal{T}_k) \right] \right\}  \\
& \mbox{ subject to } \eqref{eq:Xnu}, (\mbox{C}1), (\mbox{C}2), (\mbox{C}3),
\end{aligned}
\end{equation}
where $\beta_k > 0$ is a discount factor, a parameter measuring the risk sensitivity of participant $k$; 
$\ell_k(\cdot)$ and $h_k(\cdot)$ are two utility functions representing, respectively, the running profit and
the terminal profit.

\quad
While generally following Merton's consumption-investment framework,
our formulation as presented above takes into account some distinct features of PoS blockchains and cryptocurrencies.
One notable point is, the utilities $\ell$ and $h$ in the objective are expressed 
as functions of the number of stakes $X_k(t)$, as opposed to
their total value $P(t) X_k(t)$. To the extent that $P(t)$ is treated as exogenous (as explained above),
this difference may seem to be trivial.
 Yet, it is a reflection of the more substantial fact that 
 %\bl{The randomness in \eqref{eq:obj1} comes from the price process $P(t)$.
%In fact, many crypto-activities are irrational,
crypto-participants tend to mentally decouple the utility of holding stakes from 
their monetary value at any given time.
% of the stakes.  
%Morally, each participant has a different feeling towards cryptocurrencies, 
%and the utility functions $\ell_k(\cdot)$ and $h_k(\cdot)$ measure the willing of holding stakes for participant $k$
%by means of consumption.
For instance, holding $1$ ETH may be equivalent to $\$ 5,000$ for one person, and $\$ 500$ for another, 
and neither will be influenced by the ETH market price at the time, which could be say, about $\$ 1,500$.
%It is also possible to define the utility functions on the current value $P(t) X_k(t)$.
%In a forthcoming work,
%we will study a mean-field PoS model under a market impact model with transaction costs (frictions).}

\quad Throughout below, the following conditions will be assumed:
% on the given data, $N(\cdot), \ell(\cdot), h(\cdot)$.

\smallskip
\begin{assump}~
\label{assump:1}
\begin{enumerate}[itemsep = 3 pt]
\item[(i)]
$N: [0,T] \to \mathbb{R}_{+}$ is increasing with $N(0) = N > 0$, and $N \in \mathcal{C}^2([0,T])$.
\item[(ii)]
$\ell: \mathbb{R}_{+} \to \mathbb{R}_{+}$ is increasing and $\ell \in \mathcal{C}^1(\mathbb{R}_+)$.
\item[(iii)]
$h: \mathbb{R}_{+} \to \mathbb{R}_{+}$ is increasing and $h \in \mathcal{C}^1(\mathbb{R}_+)$.
\end{enumerate}
\end{assump}

%and hence, 
%measure how satisfactory participant $k$ is by holding stakes versus consumption.
%For instance, if $\ell_k(x) = \gamma x$,
%then participant $k$ finds holding one unit of stake as happy as consuming $\gamma$ units of the risk-free asset.

%------------------------------------------------------------------------------------------------
\section{The Consumption-Investment Problem}
\label{sc3}

%\quad In this section we study the consumption-investment problem \eqref{eq:obj1}.
%We show that a risk averse participant will not hold any risk-free asset;
%while the optimal strategy to trade stakes is more complicated, 
%which also depends on its price process $P_t$.
%Different optimal trading strategies are illustrated in the special case $\ell$ and $h$ are linear,
%in terms of the discounted price process.

\quad Here we study the consumption-investment problem for participant $k$ in \eqref{eq:obj1}.
To lighten notation, omit the subscript $k$, and write out the problem in full as follows,
where (C0) is a repeat of the state dynamics in \eqref{eq:Xnu}:
\begin{align}
\label{eq:obj12}
U(x):= \sup_{\{(\nu(t), b(t))\}} & J(\nu,b):= \mathbb{E}\left\{ \int_0^{\mathcal{T}}e^{-\beta t} \left[dc(t) + \ \ell(X(t)) dt\right] + e^{-\beta \mathcal{T}} \left[ b(\mathcal{T}) + h(X(\mathcal{T}) \right]\right\} \\
& \mbox{ subject to } X'(t) = \nu(t) + \frac{N'(t)}{N(t)} X(t), \, X(0) = x, \tag{C0} \\
& \qquad \qquad \quad \, dc(t) =  rb(t)dt-db(t) - P(t) \nu(t) dt , \tag{C1} \\
& \qquad \qquad \quad \,  b(0) = 0, \, b(t) \ge 0 \mbox{ and } 0 \le X(t) \le N(t), \tag{C2} \\ 
& \qquad \qquad \quad  \, |\nu(t)| \le \overline{\nu}. \tag{C3}
\end{align}
where $\mathcal{T}: = \inf\{t>0: X(t) = 0 \mbox{ or } N(t)\} \wedge T$. % is the exit time.

\smallskip

\quad 
%Substituting the constraint (C1) into the objective function, we have
%the goal is to maximize 
%\begin{eqnarray}
%\label{eq:42a}
%U(x, \nu, b) := \mathbb{E} \left\{\int_0^\mathcal{T} e^{-\beta t} \left[-e^{rt} d(e^{-rt} b(t)) - P(t) \nu(t) dt + \ell(X(t))dt \right] 
%+ e^{-\beta \mathcal{T}} \left[ b(\mathcal{T}) + h(X(\mathcal{T}) \right] \right\} 
%\end{eqnarray}
Note that the expectation in the objective function is with respect to $P(t)$, which is involved
in $dc_k(t)$ via (C1). 
Denote
\begin{equation}
\label{eq:Pbeta}
\widetilde{P}_\beta(t): = e^{-\beta t} \mathbb{E} P(t) , \qquad  t\in[0, T].
\end{equation}
Substituting the constraint (C1) into the objective function, 
and taking into account 
$$rb(t) dt -db(t)   = -e^{rt}d(e^{-rt} b(t)),$$ 
along with \eqref{eq:Pbeta}, we have
\begin{eqnarray}
\label{eq:42}
J(\nu, b)
 &=&- \int_0^\mathcal{T} e^{(r - \beta)t} d(e^{-rt} b(t)) 
 + e^{-\beta \mathcal{T}} b(\mathcal{T}) \nonumber \\
%& \qquad \qquad \qquad \qquad \qquad  +  \int_0^\mathcal{T} e^{-\beta t} (-\mathbb{E}P(t) \nu(t) + \ell(X(t)) dt + e^{-\beta \mathcal{T}}h(X(\mathcal{T})) \notag \\
%& = \left[-e^{-\beta t} b(t)\right]_0^{\mathcal{T}} + (r - \beta) \int_0^{\mathcal{T}}e^{-\beta t} b(t) dt +  e^{-\beta \mathcal{T}} b(\mathcal{T}) \notag \\
&& \qquad   
+ \underbrace{\int_0^\mathcal{T} \big[-\widetilde{P}_\beta(t) \nu(t) + e^{-\beta t}\ell(X(t)\big] dt 
+ e^{-\beta \mathcal{T}}h(X(\mathcal{T}))}_{J_2(\nu)}\nonumber\\
%+  \int_0^\mathcal{T} \big[-\widetilde{P}_\beta(t)  \nu(t) + e^{-\beta t} \ell(X(t)\big] dt
%+ e^{-\beta \mathcal{T}}  h(X(\mathcal{T})  \nonumber\\
% + e^{-\beta \mathcal{T}}h(X(\mathcal{T})) \notag \\
&=& \underbrace{(r - \beta)\int_0^{\mathcal{T}} e^{-\beta t} b(t) dt}_{:=J_1(b)} \; +\; J_2( \nu) ,
%+ \underbrace{\int_0^\mathcal{T} \big[-\widetilde{P}_\beta(t) \nu(t) + e^{-\beta t}\ell(X(t)\big] dt 
%+ e^{-\beta \mathcal{T}}h(X(\mathcal{T}))}_{U_2(x, \nu)},
\end{eqnarray}
%where we use the fact that $db(t) - rb(t) dt = e^{rt}d(e^{-rt} b(t))$ in the first equality,
where $b(0) = 0$ is used in the last equality.
%Observe that the first term $J_1(b)$ in \eqref{eq:42} only depends on $b(t)$, 
%while the second term $J_2(\nu)$ only depends on $\nu(t)$.
Hence,
\begin{equation}
\label{eq:U}
U(x) := \sup_{\{(\nu, b)\}} J( \nu,b) = \sup_{b} J_1(b) + 
\sup_{\nu} J_2(\nu).
\end{equation}

\quad
Next, suppose $\beta\ge r$, a condition that will be assumed below 
(and readily justified as the risk premium associated with the valuation of any stake over the risk-free asset).
Then, from the $J_1(b)$ expression in \eqref{eq:42}, and taking into account $b(t) \ge 0$ as constrained in (C2), we have 
$\sup_b J_1(b) = 0$ with the optimality binding at $b_*(t) = 0$ for all $t$.
Consequently, the problem in \eqref{eq:obj12} is reduced to
\begin{equation}
\label{eq:45}
U(x) = \sup_{\nu} J_2(\nu) \quad \mbox{subject to (C0), (C2'), (C3)},
\end{equation}
where (C2') is (C2) without the constraints on $b(\cdot)$.

\quad In summary, the key fact here is that the objective $U(x)$ is separable 
in the control variables $(\nu(t), b(t))$; hence
 the problem in \eqref{eq:obj12} is decomposed into 
two optimal control problems, 
one on the risk-free asset $b(t)$, and the other on the trading of stakes $\nu(t)$,
as specified in \eqref{eq:42} and \eqref{eq:U}.
%If a participant is risk averse (i.e. $\beta \ge r$), 
%then he will not purchase any risk-free asset 
%as suggested by the control problem on $b(t)$.
Moreover, under the condition $\beta \ge r$, the consumption-investment problem is reduced 
to the one in \eqref{eq:45}, where the objective function $J_2(\nu)$ -- refer to \eqref{eq:42} -- takes the form 
of a tradeoff between the utility from holding stakes ($\ell(X(t))$ and $h(X(\mathcal{T}))$)
 and the dis-utility of reducing consumption ($-\widetilde{P}_\beta(t) \nu(t)$). 
% is reduced to a stake trading problem \eqref{eq:45}. 
%In contrast with the stake pursuing problem \eqref{eq:obj22}
%where the contributing term $\ell(X(t))$ gets larger as the participant purchases more stakes, 
%there are two opposite terms in the stake trading problem \eqref{eq:45}: 
%$-\widetilde{P}_\beta(t) \nu(t)$ from reduced consumption
%and $\ell(X(t))$ from holding stakes. 
Thus, the optimal trading strategy needs to strike a balance between these two opposing terms.

\smallskip

\quad
Before we present the optimal solution to the consumption-investment problem in \eqref{eq:45},
we make a digression to first study a simple degenerate case
% of the consumption-investment problem in \eqref{eq:obj1}, 
% is the stake pursuing problem,
of $\widetilde{P}_\beta(t)\equiv 0$.
%corresponding to the case of $\beta\to \infty$. 
This special case removes the tradeoff mentioned above, so the solution becomes a one-sided strategy of always   
%Then, the participant $k$'s problem becomes one of   
{accumulating} (or ``hoarding'') the stakes at full capacity ($\overline{\nu}$). 
Yet, as the analysis below will show, there are still some interesting (and subtle) issues involved.
More importantly, this special case provides a very accessible path to finding the optimal solution 
via dynamic programming and the HJB equation. 
%We can, for instance, take $\ell_k(x) = \gamma \widetilde{\ell}_k(x)$ and $h_k(x) = \gamma \widetilde{h}_k(x)$ (with $\widetilde{\ell}_k(x), \widetilde{h}(x) > 0$ for $x > 0$),
%and send $\gamma \to \infty$.
%The problem is then transformed to:
%\begin{equation}
%\label{eq:obj2}
%\begin{aligned}
%\max_{\nu_k(t)} \,\, &  \int_0^{\mathcal{T}_k}e^{-\beta_k t} \ \ell_k(X_k(t)) dt  + e^{-\beta_k \mathcal{T}_k} h_k(X_k(\mathcal{T}_k))\\
%& \mbox{ subject to } (\mbox{C2'}), (\mbox{C}3),
%\end{aligned}
%\end{equation}
%where the no-short constraint (C2) simplifies to 
%\begin{equation}
%\label{eq:C22}
%0 \le X_k(t) \le N(t) \quad \mbox{for } 0 \le t \le \mathcal{T}_k. \tag{C2'}
%\end{equation}

%------------------------------------------------------------------------------------------------
\subsection{Stake-hoarding}
\label{sc3.1}

%\quad In this section we study the stake pursuing problem \eqref{eq:obj2}.
%We justify the heuristics that the optimal strategy for a participant is to trade (buy) at his maximal capacity, 
%and provide a closed-form solution to \eqref{eq:obj2}.

%\quad Note that the trades of all participants sum to zero, i.e. $\sum_{k = 1}^K \nu_k(t) = 0$,
%and the PoS dynamics \eqref{eq:Xnu} is additive in the sense that
%\begin{equation*}
%\left(\sum_{k \in S} X_k(t) \right)' = \sum_{k \in S} \nu_k(t) + \frac{N'(t)}{N(t)} \sum_{k \in S} X_k(t) \quad \mbox{for any subset } S \subset [K].
%\end{equation*}
%Thus, for any participant $k$ it is equivalent to a two-participant trading problem
%where the stakes of all other participants than $k$ are aggregated,
%and they behave collectively as one participant. 
%Note further that the total volume of stakes are given by $N(t)$,
%so if participant $k$ gets all $N(t)$ stakes by time $t$, 
%the trading process stops as $\max_{j \ne k} \tau_j = \inf\{t> 0: X_k(t) = N(t)\}$.
%Combing the above,
As motived above, here the problem for participant $k$ is reduced to the following (again, omit the subscript $k$):
\begin{align}
\label{eq:obj22}
U(x):=\sup_{\nu(t)} \,\, &  \int_0^{\mathcal{T}}e^{-\beta t} \ \ell(X(t)) dt  + e^{-\beta \mathcal{T}} h(X(\mathcal{T})) \\
& \mbox{ subject to  \quad (C0), (C2'), (C3)}.  \nonumber
%X'(t) = \nu(t) + \frac{N'(t)}{N(t)} X(t), \, X(0) = x, \notag \\
%& \qquad \qquad \quad \,  0 \le X(t) \le N(t), \tag{C2'} \\ 
%& \qquad \qquad \quad  \, \nu(t) \in \mathcal{C}([0, \mathcal{T}]) \mbox{  and } |\nu(t)| \le \overline{\nu}, \tag{C3}
\end{align}
%where $\mathcal{T}: = \inf\{t>0: X(t) = 0 \mbox{ or } N(t)\} \wedge T$ is the exit time.
%Here we drop the subscript `$k$' for ease of presentation.  
Below, we denote $\nu_{*}(t)$ for the optimal control process, 
$X_{*}(t)$ for the corresponding state process,
and $\mathcal{T}_{*}:= \inf\{t>0: X_{*}(t) = N(t)\} \wedge T$ for the exit time.

\begin{proposition}
\label{thm:31}
%Let Assumption \ref{assump:1} hold for the stake pursuing problem \eqref{eq:obj22}, and define
Denote
\begin{equation}
\label{eq:gammat}
\gamma(t): = \overline{\nu} N(t) \int_0^t \frac{ds}{N(s)} + \frac{x N(t)}{N} \quad \mbox{for } 0 \le t \le T.
\end{equation}
We have:
%\begin{enumerate}[itemsep = 3 pt]
\begin{itemize}
\item[(i)]
If $\overline{\nu} \int_0^T \frac{dt}{N(t)} \le \frac{N-x}{N}$, 
then $\mathcal{T}_{*} = T$.
The optimal control is $\nu_{*}(t) = \overline{\nu}$ for $0 \le t \le T$,
the optimal state process is $X_{*}(t) = \gamma(t)$ for $0 \le t \le T$,
and 
$U(x) = e^{-\beta T} h(X_{*}(T)) + \int_0^T e^{-\beta t} \ell(X_{*}(t)) dt$.
\item[(ii)]
If $\overline{\nu} \int_0^T \frac{dt}{N(t)} > \frac{N-x}{N}$, set
\begin{equation*}
t_0: = \inf\left\{t>0: \overline{\nu} \int_0^t \frac{ds}{N(s)} = \frac{N-x}{N}\right\} < T.
\end{equation*}
Assume further that
\begin{equation}
\label{eq:tech}
h(N(t))' + \ell(N(t)) \le \beta h(N(t)) \quad \mbox{for all } 0 \le t \le T.
\end{equation}
Then, $\mathcal{T}_{*} = t_0$.
The optimal strategy is $\nu_{*}(t) = \overline{\nu}$ for $0 \le t \le t_0$ (and $\nu_{*}(t) = 0$ for $t > t_0$),
the optimal state process is $X_{*}(t) = \gamma(t)$ for $0 \le t \le t_0$ (and $X_{*}(t) = N(t_0)$ for $t > t_0$),
and 
$U(x) = e^{-\beta t_0} h(X_{*}(t_0)) + \int_0^{t_0} e^{-\beta t} \ell(X_{*}(t)) dt$.
\end{itemize}
%\end{enumerate}
\end{proposition}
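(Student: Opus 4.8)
I would start from the explicit solution of the state equation (C0): writing $Y(t):=X(t)/N(t)$ turns (C0) into $Y'(t)=\nu(t)/N(t)$, so $X(t)=N(t)\big(\tfrac{x}{N}+\int_0^t\tfrac{\nu(s)}{N(s)}\,ds\big)$ for every admissible $\nu$. In particular $X^\nu(t)\le\gamma(t)$ pointwise as long as both processes are alive, with equality iff $\nu\equiv\overline\nu$; thus among admissible trajectories the full-speed-buying one is the pointwise largest, and since $\ell,h$ are increasing, ``buy at full capacity'' maximizes the integrand and the terminal payoff \emph{for as long as it is feasible}. The only subtlety is that rushing may drive $X$ up to $N(\cdot)$ early, forcing the cash-out; so the entire content of the proposition is whether triggering that monopoly as soon as possible is better than slowing down. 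I would organize the argument through dynamic programming and the HJB equation, as the paper indicates.

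Concretely, let $V(t,x)$ be the value of the problem started from $(t,x)$ on $\{0\le t\le T,\ 0\le x\le N(t)\}$, with $V(T,x)=h(x)$, $V(t,N(t))=h(N(t))$, $V(t,0)=h(0)$; its HJB equation is $\partial_tV+\tfrac{N'(t)}{N(t)}x\,\partial_xV+\sup_{|\nu|\le\overline\nu}\nu\,\partial_xV+\ell(x)-\beta V=0$, where $\sup_{|\nu|\le\overline\nu}\nu\,\partial_xV=\overline\nu(\partial_xV)^+$. I would guess $V=W$, where $W$ is the value of the always-buy control, written explicitly from the ODE: with $\Gamma_s(t,x):=N(s)\big(\tfrac{x}{N(t)}+\overline\nu\int_t^s\tfrac{du}{N(u)}\big)$ and $t^\ast(t,x):=\inf\{s>t:\Gamma_s(t,x)=N(s)\}\wedge T$,
\[
W(t,x)=\int_t^{t^\ast(t,x)}e^{-\beta(s-t)}\ell(\Gamma_s(t,x))\,ds+e^{-\beta(t^\ast(t,x)-t)}h\big(\Gamma_{t^\ast(t,x)}(t,x)\big),
\]
which on the ``slow'' region $\{\overline\nu\int_t^T\tfrac{du}{N(u)}\le1-\tfrac{x}{N(t)}\}$ reduces to integrating up to $T$ (case (i) at $(0,x)$), and on the ``fast'' region to integrating up to $t^\ast<T$ with $\Gamma_{t^\ast}=N(t^\ast)$ (case (ii)), with $t^\ast(0,x)=t_0$ and $\Gamma_\cdot(0,x)=\gamma$. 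Since each $W(s,\Gamma_s(t,x))$ is itself the value of rushing from $(s,\Gamma_s(t,x))$, differentiating along the characteristics $s\mapsto(s,\Gamma_s(t,x))$ immediately shows that $W$ satisfies $\partial_tW+\tfrac{N'}{N}x\,\partial_xW+\overline\nu\,\partial_xW+\ell(x)-\beta W=0$ on each region. Hence $W$ solves HJB \emph{provided} $\partial_xW\ge0$, so that the supremum is attained at $\nu=\overline\nu$.

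\textbf{The crux is $\partial_xW\ge0$, and this is exactly where \eqref{eq:tech} is used; I expect it to be the main obstacle.} On the slow region it is immediate, as $\partial_x\Gamma_s=N(s)/N(t)>0$ and $\ell',h'\ge0$. On the fast region, using $\partial_x\Gamma_s=N(s)/N(t)$ and $\partial_xt^\ast=-N(t^\ast)/(\overline\nu N(t))<0$ (more current stakes $\Rightarrow$ earlier monopoly), one computes
\[
\partial_xW=\int_t^{t^\ast}e^{-\beta(s-t)}\ell'(\Gamma_s)\tfrac{N(s)}{N(t)}\,ds-\frac{e^{-\beta(t^\ast-t)}N(t^\ast)}{\overline\nu N(t)}\Big[\,(h\circ N)'(t^\ast)+\ell(N(t^\ast))-\beta h(N(t^\ast))\,\Big],
\]
and the bracket is $\le0$ precisely by \eqref{eq:tech}, so both terms are nonnegative and $\partial_xW\ge0$. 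Equivalently (the economic content), integrating \eqref{eq:tech} against $e^{-\beta t}$ gives, for $t_0\le a\le b\le T$, $e^{-\beta a}h(N(a))\ge e^{-\beta b}h(N(b))+\int_a^b e^{-\beta t}\ell(N(t))\,dt$: cashing out the monopoly now beats staying at the top, collecting running utility, and cashing out later. Without \eqref{eq:tech} the value function need not even be monotone and the full-speed strategy can fail, so this step is genuinely where the hypothesis earns its keep.

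Finally I would close with a standard verification step: $W$ is continuous and piecewise $\mathcal C^1$ (the two regional formulas agree on the interface $\{t^\ast=T\}$, where $\Gamma_T=N(T)$) and Lipschitz on the compact domain, so for any admissible $\nu$ the map $s\mapsto e^{-\beta s}W(s,X^\nu(s))$ is absolutely continuous, and by the HJB inequality $\partial_tW+\tfrac{N'}{N}x\,\partial_xW+\nu\,\partial_xW+\ell(x)-\beta W\le0$ (valid on each region, using $\partial_xW\ge0$ and $\nu\le\overline\nu$) it satisfies $\tfrac{d}{ds}\big[e^{-\beta s}W(s,X^\nu(s))\big]\le-e^{-\beta s}\ell(X^\nu(s))$ a.e.; integrating from $0$ to $\mathcal T^\nu$ and using $W(t,N(t))=h(N(t))$ together with $W(t,0)\ge h(0)$ (immediate when $h(0)=0$, and easy otherwise in the regimes considered) yields $J(\nu)\le W(0,x)$. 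Since the always-buy control attains $W(0,x)$ — with exit time $T$ and state $\gamma$ in case (i), and exit time $t_0$ and state $\gamma$ on $[0,t_0]$ in case (ii) — we conclude $U(x)=W(0,x)=J(\nu_\ast)$, which is the stated value. (Case (i) can also be handled without HJB: there no admissible trajectory can reach $N(\cdot)$ before $T$, so $\mathcal T^\nu=T$ unless $X^\nu$ hits $0$, and then $J(\nu)\le\int_0^Te^{-\beta t}\ell(\gamma(t))\,dt+e^{-\beta T}h(\gamma(T))$ by pointwise domination; \eqref{eq:tech} is needed only for case (ii).)
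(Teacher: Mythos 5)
Your proposal is correct and follows essentially the same route as the paper: guess the always-buy control, write the candidate value function explicitly along the characteristics $\gamma_{t,x}$, and reduce everything to showing $\partial_x v \ge 0$, which is immediate in the no-early-exit case and uses $\partial_x \mathcal{T}_{t,x}\le 0$ together with \eqref{eq:tech} in the early-exit case — exactly the paper's Case 1/Case 2 computation. The only (inessential) difference is the closing step: you finish with a direct verification argument integrating $e^{-\beta s}W(s,X^{\nu}(s))$ along arbitrary admissible trajectories (which requires the small boundary check $W(t,0)\ge h(0)$ that you flag), whereas the paper instead invokes uniqueness of viscosity solutions for the HJB equation (its Lemma \ref{lem:HJBvis}) to identify the candidate with the value function.
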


\begin{figure}[h]
\centering
\includegraphics[width=0.5\columnwidth]{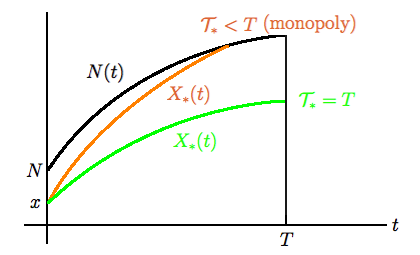}
\caption{Optimal stake trading: concentration and monopoly.}
\label{fig:1}
\end{figure}

\quad %Before proving Theorem \ref{thm:31}, 
Deferring the proof, we first make a few comments on the above proposition.
%As mentioned earlier, it may not be surprising that the optimal strategy for a participant is to buy as many stakes as possible. 
Note that $\gamma(t)$ as specified in \eqref{eq:gammat} is identified as the optimal state process $X_*(t)$, 
which is the number of stakes given $\nu_*(t) = \overline{\nu}$.
It is easy to see that the participant's share of stakes, $X_{*}(t)/N(t)$, is increasing in $t$, 
%so maximizing an individual's utility will 
leading to centralization regardless of how the rewarding scheme is designed
(although large rewards may slow down the speed towards concentration).
The interesting point of the above theorem is in its part (ii), where 
the required condition \eqref{eq:tech} is a technical one, to ensure the optimality of $\nu_{*}(t) = \overline{\nu}$. 
The more substantive fact is $\mathcal{T}_{*}=t_0 <T$, when $X(\mathcal{T}_*) = N(\mathcal{T}_*)$,
i.e., the participant has accumulated all stakes available in the system, leading to
the extreme situation of monopoly (or ``dictatorship''); and this is done before the end of the horizon, i.e.,
forcing a pre-matured exit time.
See Figure \ref{fig:1} for an illustration. 
%This corresponds to the extreme situation where dictatorship, i.e. the participant gets all available stakes, 
%occurs before time $T$.
%Observe that the condition involves both the design of the PoS protocol (the process $N(t)$)
%and the participant's plan (the trading capacity $\overline{\nu}$): 
%the larger the capacity $\overline{\nu}$ and the volume/rewards $N(t)$ are, 
%the faster centralization forms and the earlier dictatorship emerges.
%The condition \eqref{eq:tech} is technical, which is used to prove the optimality of $\nu_{*}(t) = \overline{\nu}$ 
%up to the time $X_*(t)$ hits $N(t)$.

\quad
The following corollary illustrates further this extreme phenomenon, % of centralization and dictatorship 
with the polynomial family $N_{\alpha}(t)$ defined by \eqref{eq:Nal},
and with a long time horizon ($T \to \infty$).

\begin{corollary}
\label{coro:poly}
Let $(N_{\alpha}(t), \, 0 \le t \le T)$ be defined by \eqref{eq:Nal},
$(X_{\alpha, *}(t), \, 0 \le t \le T)$ be the optimal state process defined by \eqref{eq:gammat} corresponding to $N_{\alpha}(t)$,
and $\mathcal{T}_{\alpha,*}: = \inf\{t>0: X_{\alpha,*}(t) = N_{\alpha}(t)\}$ be the exit time.
Assume that the condition \eqref{eq:tech} holds for $N_{\alpha}(t)$.
Then, as $T \to \infty$, we have
%\begin{enumerate}[itemsep = 3 pt]
\begin{itemize}
\item[(i)]
For $\alpha > 1$, 
\begin{itemize}[itemsep = 3 pt]
\item
if $\overline{\nu}  \le (\alpha - 1)(N-x) N^{-\frac{1}{\alpha}}$,
then $X_{\alpha,*}(t) < N_{\alpha}(t)$ for all $t$.
Moreover,
\begin{equation}
\label{eq:33}
\lim_{t \to \infty} \frac{X_{\alpha,*}(t)}{N_{\alpha}(t)} =  \frac{\overline{\nu}}{\alpha - 1} N^{\frac{1-\alpha}{\alpha}} + \frac{x}{N}.
\end{equation}
\item
if $\overline{\nu} > (\alpha - 1)(N-x) N^{-\frac{1}{\alpha}}$,
then $\mathcal{T}_{\alpha, *} < \infty$.
\end{itemize}
\item[(ii)]
For $\alpha \le 1$, 
we have $\mathcal{T}_{\alpha, *} < \infty$.
\end{itemize}
%\end{enumerate}
\end{corollary}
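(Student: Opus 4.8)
\textbf{Proof proposal for Corollary \ref{coro:poly}.}

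The plan is to substitute the explicit polynomial form $N_\alpha(t) = (N^{1/\alpha}+t)^\alpha$ into the characterization of the exit time given by Proposition \ref{thm:31}. The key quantity to analyze is the integral $\int_0^t \frac{ds}{N_\alpha(s)} = \int_0^t (N^{1/\alpha}+s)^{-\alpha}\,ds$, together with the condition that distinguishes part (i) from part (ii) of that proposition, namely whether $\overline{\nu}\int_0^\infty \frac{ds}{N_\alpha(s)}$ is at most or exceeds $\frac{N-x}{N}$. First I would evaluate this improper integral: for $\alpha > 1$ it converges, with $\int_0^\infty (N^{1/\alpha}+s)^{-\alpha}\,ds = \frac{1}{\alpha-1} N^{(1-\alpha)/\alpha}$, while for $\alpha \le 1$ it diverges. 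This dichotomy is exactly the source of the case split in the corollary.

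Next, for $\alpha > 1$: by Proposition \ref{thm:31}(i), if $\overline{\nu}\cdot\frac{1}{\alpha-1}N^{(1-\alpha)/\alpha} \le \frac{N-x}{N}$ — equivalently $\overline{\nu} \le (\alpha-1)(N-x)N^{-1/\alpha}$ — then the condition $\overline{\nu}\int_0^T \frac{dt}{N_\alpha(t)} \le \frac{N-x}{N}$ holds for every finite $T$, so $\mathcal{T}_{\alpha,*} = T$ and $X_{\alpha,*}(t) < N_\alpha(t)$ for all $t < T$; letting $T\to\infty$ this says $X_{\alpha,*}(t) < N_\alpha(t)$ for all $t$. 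To get the limiting share \eqref{eq:33}, I would divide the formula $X_{\alpha,*}(t) = \gamma(t) = \overline{\nu}N_\alpha(t)\int_0^t\frac{ds}{N_\alpha(s)} + \frac{xN_\alpha(t)}{N}$ by $N_\alpha(t)$, obtaining $\frac{X_{\alpha,*}(t)}{N_\alpha(t)} = \overline{\nu}\int_0^t \frac{ds}{N_\alpha(s)} + \frac{x}{N}$, and then pass to the limit $t\to\infty$ using the computed value of the integral. Conversely, if $\overline{\nu} > (\alpha-1)(N-x)N^{-1/\alpha}$, the improper integral times $\overline{\nu}$ exceeds $\frac{N-x}{N}$, so by continuity and monotonicity of $t\mapsto\overline{\nu}\int_0^t\frac{ds}{N_\alpha(s)}$ there is a finite $t_0$ with $\overline{\nu}\int_0^{t_0}\frac{ds}{N_\alpha(s)} = \frac{N-x}{N}$; invoking Proposition \ref{thm:31}(ii) (whose technical hypothesis \eqref{eq:tech} is assumed) gives $\mathcal{T}_{\alpha,*} = t_0 < \infty$. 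For $\alpha \le 1$, the integral diverges, so for any $\overline{\nu} > 0$ and any $x$ the threshold $\frac{N-x}{N} \le 1$ is eventually exceeded; hence again $t_0 < \infty$ and $\mathcal{T}_{\alpha,*} < \infty$ by Proposition \ref{thm:31}(ii). (The boundary subcase $x = N$ is trivial since the participant already holds everything.)

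I do not expect a serious obstacle here: the corollary is essentially a direct specialization of Proposition \ref{thm:31}, and the only real computation is the elementary integral $\int_0^t (N^{1/\alpha}+s)^{-\alpha}\,ds$ and its convergence behavior. The one point that warrants care is ensuring the hypothesis \eqref{eq:tech} is genuinely in force for $N_\alpha(t)$ over the relevant (now infinite) horizon — but this is assumed in the statement, so it need not be re-derived. A minor bookkeeping issue is that Proposition \ref{thm:31} is stated on a finite horizon $[0,T]$, so for the $T\to\infty$ conclusions I would phrase everything as: for each fixed $t$, choose $T > t_0 \vee t$ and apply the proposition, then observe the conclusions are stable under increasing $T$. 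This makes the limiting statements rigorous without needing a separate infinite-horizon version of the dynamic programming argument.
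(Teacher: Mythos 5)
Your proposal is correct and follows essentially the same route as the paper: compute $\int_0^T dt/N_\alpha(t)$ explicitly, observe that it converges to $\frac{1}{\alpha-1}N^{(1-\alpha)/\alpha}$ for $\alpha>1$ and diverges for $\alpha\le 1$, and then compare $\overline{\nu}$ times this quantity with $\frac{N-x}{N}$ to invoke the appropriate case of Proposition \ref{thm:31}. Your additional remarks on deriving \eqref{eq:33} from $\gamma(t)/N_\alpha(t)$ and on the finite-horizon bookkeeping are exactly the steps the paper leaves as ``immediate.''
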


\begin{proof}
Note that 
\begin{equation}
\label{eq:Nal1}
\int_0^T \frac{dt}{N_{\alpha}(t)} =
\left\{ \begin{array}{lcl}
\frac{1}{1-\alpha} \left((T+ N^{\frac{1}{\alpha}})^{1- \alpha} - N^{\frac{1-\alpha}{\alpha}} \right) & \mbox{for } \alpha \ne 1 \\
\log\left(1 + T/N \right) & \mbox{for } \alpha = 1.
\end{array}\right.
\end{equation}
As $T \to \infty$,
the dominant term in $\frac{1}{1-\alpha} \left((T+ N^{\frac{1}{\alpha}})^{1- \alpha} - N^{\frac{1-\alpha}{\alpha}} \right)$ is
$\frac{1}{\alpha - 1} N^{\frac{1-\alpha}{\alpha}}$ if $\alpha > 1$, and is $\frac{1}{1-\alpha} T^{1 - \alpha}$ if $\alpha < 1$;
and the dominant term in $\log\left(1 + T/N \right)$ is $\log T$.
It then suffices to compare $\overline{\nu} \int_0^T \frac{dt}{N_{\alpha}(t)}$ to $\frac{N-x}{N}$, and the rest of the corollary is immediate.
\end{proof}

\quad This corollary shows a sharp phase transition towards monopoly in terms of the rewarding schemes. 
For $\alpha > 1$ (increasing reward), there is a threshold for $\overline{\nu}$, only above which monopoly may occur, 
and below which the share of stakes increases towards the value on the right side of \eqref{eq:33}.
For $\alpha \le 1$ (constant or decreasing reward), monopoly always occurs. 
%On the other hand, the thresholds in Theorem \ref{thm:31} and Corollary \ref{coro:poly} are useful in the design of the PoS protocol 
%by an inverse procedure.
%Imagine that there is an aggressive participant who seeks to maximize his utility by taking $\nu_{*}(t) = \overline{\nu}$, which may be large.
Thus, these results have practical implications in the design of the PoS protocol. For instance, if/when 
certain participants have large capacities, adopting a suitable increasing reward scheme will counter the  
effect of concentration. 
% can then estimate this capacity $\overline{\nu}$, and move to a suitably large rewarding scheme 
%(e.g. $N_\alpha (t)$ with a suitably large $\alpha$), 
%to slow down centralization and avoid dictatorship.

\smallskip

\quad Now, returning to the proof of Proposition \ref{thm:31}, 
we use the standard machinery of dynamic programming and the HJB equation. 
Consider the following problem, where $V(t,x)$ is the ``value-to-go'' function,
for $0 \le t \le T$ and $0 \le x \le N(t)$:
\begin{equation}
\label{eq:obj23}
\begin{aligned}
V(t,x):=& \max_{\{\nu(s), s\ge t\}} \,\,   \int_t^{\mathcal{T}}e^{-\beta s} \ \ell(X(s)) ds  + e^{-\beta \mathcal{T}} h(X(\mathcal{T})) \\
& \mbox{ subject to } X'(s) = \nu(s) + \frac{N'(s)}{N(s)} X(s), \, X(t) = x, \nonumber \\
& \qquad \qquad \quad \,  0 \le X(s) \le N(s),\nonumber  \\
& \qquad \qquad \quad  \, |\nu(s)| \le \overline{\nu}. \nonumber 
\end{aligned}
\end{equation}
%\begin{equation}
%\label{eq:obj23}
%\begin{aligned}
%V(t,x):=& \max_{\{\nu(s), s\ge t\}} \,\,   \int_t^{\mathcal{T}}e^{-\beta s} \ \ell(X(s)) ds  + e^{-\beta \mathcal{T}} h(X(\mathcal{T})) \\
%& \mbox{ subject to } X'(s) = \nu(s) + \frac{N'(s)}{N(s)} X(s), \, X(t) = x, \nonumber \\
%& \qquad \qquad \quad \,  0 \le X(s) \le N(s),\nonumber  \\
%& \qquad \qquad \quad  \, \nu(s) \in \mathcal{C}([t, \mathcal{T}]) \mbox{  and } |\nu(s)| \le \overline{\nu}. \nonumber 
%\end{aligned}
%\end{equation}
Clearly, the solution to the above problem concerning $V(t,x)$, for all $t\in[0,T]$ and $x\in [0,N(t)]$, will yield the desired solution 
to $U(x)$ in \eqref{eq:obj22}, since  $U(x) = V(0,x)$.
The following lemma identifies an HJB equation (with terminal and boundary conditions), to which.
$V(t,x)$ is a solution.

\begin{lemma}
\label{lem:HJBvis}
%Let Assumption \ref{assump:1} hold for the problem \eqref{eq:obj23}.
Let $Q: = \{(t,x): 0 \le t < T, \, 0<x<N(t)\}$.
Then $V$ is the (unique) viscosity solution to the following HJB equation:
\begin{equation}
\label{eq:HJB3}
\left\{ \begin{array}{lcl}
\partial_t v + e^{-\beta t} \ell(x) + \frac{x N'(t)}{N(t)} \partial_x v + \sup_{|\nu| \le \overline{\nu}} \{\nu \, \partial_x v\} = 0 \quad (t,x)\in Q, \\
v(T,x) = e^{-\beta T} h(x), \\
v(t,0) = e^{-\beta t} h(0), \,\, v(t, N(t)) = e^{-\beta t} h(N(t)).
\end{array}\right.
\end{equation}
\end{lemma}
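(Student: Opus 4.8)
The plan is to proceed in the standard three-step fashion for deterministic optimal control: (1) verify the dynamic programming principle (DPP) for $V(t,x)$ on $Q$; (2) use the DPP to show $V$ is a viscosity solution of the interior HJB equation, and check the terminal and boundary conditions directly from the definition of $V$; (3) invoke a comparison principle to obtain uniqueness. First I would record the DPP: for any $(t,x) \in Q$ and any $t \le t+\delta \le \mathcal{T}$,
\begin{equation*}
V(t,x) = \sup_{\{\nu(s),\, t \le s \le t+\delta\}} \left\{ \int_t^{t+\delta} e^{-\beta s}\,\ell(X(s))\,ds + V\big(t+\delta,\, X(t+\delta)\big) \right\},
\end{equation*}
which follows by splitting the control over $[t,t+\delta]$ and $[t+\delta,\mathcal{T}]$ and using measurable selection (the dynamics \eqref{eq:obj23} are linear in $X$ with bounded control, so the state map is Lipschitz and well-posed, and the admissible set is compact in the weak-$\ast$ sense, so the supremum is attained and the gluing is legitimate).

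Next I would derive the interior equation. For the subsolution property, fix a smooth $\varphi$ with $V-\varphi$ having a local max at $(t_0,x_0) \in Q$; choosing the constant control $\nu(s) \equiv \nu$ for $|\nu|\le\overline\nu$ in the DPP, dividing by $\delta$ and letting $\delta \downarrow 0$ gives $\partial_t\varphi + e^{-\beta t_0}\ell(x_0) + \tfrac{x_0 N'(t_0)}{N(t_0)}\partial_x\varphi + \nu\,\partial_x\varphi \ge 0$ for every such $\nu$, hence the sup over $\nu$ is $\ge 0$. For the supersolution property, fix $\varphi$ with $V-\varphi$ having a local min at $(t_0,x_0)$; using near-optimal controls in the DPP and a Gronwall estimate to control $X(s)-x_0$ over $[t_0,t_0+\delta]$, one gets the reverse inequality with the sup. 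The terminal condition $v(T,x)=e^{-\beta T}h(x)$ is immediate since $\mathcal{T}=T$ forces the only remaining term to be $e^{-\beta T}h(x)$; the boundary conditions at $x=0$ and $x=N(t)$ follow because the exit time $\mathcal{T}$ is triggered exactly when $X$ hits $0$ or $N(\cdot)$, so $V(t,0)=e^{-\beta t}h(0)$ and $V(t,N(t))=e^{-\beta t}h(N(t))$ directly from the definition (with the convention that once absorbed the trajectory is frozen). I should also note $V$ is continuous on $\overline Q$, which is needed to state the boundary conditions in the strong sense; continuity follows from Lipschitz dependence of the state on $(t,x)$ and the $\mathcal{C}^1$ regularity of $\ell,h,N$.

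Finally, uniqueness: the Hamiltonian $H(t,x,p) = e^{-\beta t}\ell(x) + \tfrac{xN'(t)}{N(t)}p + \overline\nu|p|$ is Lipschitz in $x$ (on the bounded strip $0\le x\le N(t)$, using Assumption~\ref{assump:1}(i)--(ii)) and Lipschitz in $p$ uniformly, which is exactly the structural condition for the classical comparison principle for first-order HJB equations on a bounded domain with prescribed continuous terminal/boundary data (e.g.\ Crandall--Ishii--Lions). This yields that the viscosity solution is unique, completing the proof. The main obstacle I anticipate is the treatment of the \emph{boundary}: the domain $Q$ has a moving, curved lateral boundary $x=N(t)$ and the characteristics of the drift $\tfrac{N'(t)}{N(t)}x$ may point inward or outward there depending on $\nu$, so some care is needed to argue that the Dirichlet condition is attained in the strong (not merely viscosity) sense — here the key point is that whenever $X$ reaches $N(t)$ the process is stopped, so there is genuinely no outflow issue, and likewise at $x=0$; making this rigorous, together with the continuity of $V$ up to $\partial Q$, is where the bulk of the technical work lies.
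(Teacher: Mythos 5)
Your proposal follows essentially the same route as the paper: the paper simply asserts that $V$ is a viscosity solution of \eqref{eq:HJB3} ``by a standard dynamic programming argument'' with a citation to Fleming--Soner, and then obtains uniqueness by verifying the structural estimate \eqref{eq:uniqueLip} on the Hamiltonian $H(t,x,p)=e^{-\beta t}\ell(x)+\frac{xN'(t)}{N(t)}p+\overline{\nu}|p|$ and invoking the comparison theorem from the same reference. Your DPP, your identification of the Lipschitz structure of $H$ (note that the term $\frac{xN'(t)}{N(t)}p$ forces the constant to grow linearly in $|p|$, which is exactly the $|x-y|\,|p|$ term in \eqref{eq:uniqueLip}, so the condition is not ``Lipschitz in $x$'' uniformly but Lipschitz with a $(1+|p|)$ factor --- still within the scope of the standard comparison principle), and your remarks on the lateral boundary all fill in details the paper leaves to the reference.

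One concrete slip: you have paired the two halves of the DPP with the wrong inequalities. At a local maximum of $V-\varphi$ one has $V\le\varphi$ nearby, so the inequality $V(t_0,x_0)\ge\int+V(t_0+\delta,X(t_0+\delta))$ obtained from a fixed constant control cannot be converted into an inequality for $\varphi$; the subsolution property ($\partial_t\varphi+H\ge 0$ at a local max) requires the ``$\le$'' half of the DPP, i.e.\ near-optimal controls. Symmetrically, the supersolution property at a local minimum is the one proved with arbitrary constant controls. As written, your derivation of ``$\ge 0$ for every $\nu$'' at the local max does not go through. This is a routine sign-chasing error that is fixed by swapping the two arguments, and does not affect the overall validity of the approach.
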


\begin{proof}
Write the HJB equation as $\partial_t v + H(t,x, \partial_x v) = 0$, where
\begin{equation*}
H(t,x,p):= e^{-\beta t} \ell(x) + \frac{x N'(t)}{N(t)} p + \sup_{|\nu| \le \overline{\nu}}\{\nu p\}.
\end{equation*}
The fact that $V$ as specified in \eqref{eq:obj23} is a viscosity solution follows a standard dynamic programming argument,
see \cite[Chapter II, Section 7]{FS06}.

Moreover, from the conditions in Assumption \ref{assump:1}, we have, 
\begin{equation}
\label{eq:uniqueLip}
|H(t,x,p) - H(s,y,q)| \le C(|t-s| + |x-y| + |p-q| +  |x-y| |p| + |t-s| |p|),
\end{equation}
for $0 \le s,t \le T$ and $0 \le x,y \le N(t)$,
 and for some $C > 0$.
By \cite[Chapter II, Corollary 9.1]{FS06}, the HJB equation in \eqref{eq:HJB3} has a unique viscosity solution, which then 
must be none other than $V$. 
% must be that unique viscosity solution to the HJB equation \eqref{eq:HJB3}.
\end{proof}

\quad What remains is to pin down the term 
$\sup_{|\nu| \le \overline{\nu}} \{\nu \, \partial_x v\}$ in the HJB equation, i.e., to identify the maximizing $\nu$.
Given the intuitive solution that $\nu = \overline{\nu} > 0$ (a ``conjecture,'' so far), 
the HJB equation in \ref{eq:HJB3} is expected to be
\begin{equation}
\label{eq:transp3}
\left\{ \begin{array}{lcl}
\partial_t v + e^{-\beta t} \ell(x) + \left( \overline{\nu} + \frac{x N'(t)}{N(t)}\right) \partial_x v = 0 \,\, \mbox{in } Q, \\
% \{0 \le t < T, \, 0 < x < N(t)\}, \\
v(T,x) = e^{-\beta T} h(x), \\
v(t,0) = e^{-\beta t} h(0), \,\, v(t, N(t)) = e^{-\beta t} h(N(t)),
\end{array}\right.
\end{equation}
which is a transport equation with variable coefficients. 
Now we solve the transport equation \eqref{eq:transp3} by the method of characteristics.
For $0 \le t \le T$ and $0 \le x \le N(t)$, 
let $\gamma_{t,x}(s)$ be the solution to the following equation:
\begin{equation}
\label{eq:gamma}
\gamma'_{t,x}(s) = \overline{\nu} + \frac{N'(s)}{N(s)} \gamma_{t,x}(s), \quad s>t; \qquad \gamma_{t,x}(t) = x.
\end{equation}
A direct computation yields 
\begin{equation}
\label{eq:gamma1}
\gamma_{t,x}(s) = \overline{\nu} N(s) \int_t^s \frac{du}{N(u)} + \frac{x N(s)}{N(t)}, \quad s \ge t .
\end{equation}
Under the regularity conditions in Assumption \ref{assump:1},
it is standard that (see e.g. \cite{Am08, Golse13}) 
\begin{equation}
\label{eq:vtx3}
v(t,x) = e^{-\beta \mathcal{T}_{t,x}} h(\gamma_{t,x}(\mathcal{T}_{t,x})) + \int_t^{\mathcal{T}_{t,x}} e^{-\beta s} \ell(\gamma_{t,x}(s)) ds,
\end{equation}
where $\mathcal{T}_{t,x}: = \inf\{s> t: \gamma_{t,x}(s) = N(s)\} \wedge T$.
We will next show that $v(t,x)$ given by \eqref{eq:vtx3} indeed solves the HJB equation \eqref{eq:HJB3}, 
which then proves Proposition \ref{thm:31}.

\begin{proof}[Proof of Proposition \ref{thm:31}]
%As discussed above, what remains is to justify that $\nu = \overline{\nu} > 0$ is the solution to 
%$\sup_{|\nu| \le \overline{\nu}} \{\nu \, \partial_x v\}$.
%And to do so, it suffices to show that $\partial_x v\ge 0$,
%which yields  $\sup_{|\nu| \le \overline{\nu}}\{\nu \partial_x v\} = \overline{\nu} \partial_x v$.
%\quad 
From the expression of $\gamma_{t,x}(s)$ in \eqref{eq:gamma1}, we have
\begin{equation}
\label{eq:310}
\partial_x \gamma_{t,x}(s) = \frac{N(s)}{N(t)} > 0 \quad \mbox{and} \quad \partial_x \mathcal{T}_{t,x} \le 0.
\end{equation}
Note that $\gamma_{t,x}(s)/N(s)$ is increasing in $s$. 
There are two cases.

{\bf Case 1}: If $\gamma_{t,x}(T)/N(T) \le 1$, then
$\mathcal{T}_{t,x} = T$ and hence, $v(t,x) = e^{-\beta T} h(\gamma_{t,x}(T)) + \int_t^T e^{-\beta s} \ell(\gamma_{t,x}(s)) ds$.
By the regularity conditions in Assumption \ref{assump:1}, we get
\begin{equation*}
\partial_x v = e^{-\beta T} \frac{N(T)}{N(t)} h'(\gamma_{t,x}(T)) + \int_t^T e^{-\beta s} \frac{N(s)}{N(t)} \ell'(\gamma_{t,x}(s))ds \ge 0,
\end{equation*}
where the non-negativity follows from the fact that $N(t) > 0$ and $\ell, h$ are increasing.

{\bf Case 2}: If $\gamma_{t,x}(T)/N(T) > 1$,
then $\mathcal{T}_{t,x} < T$, and hence 
$v(t,x) = e^{-\beta \mathcal{T}_{t,x}} h(N(\mathcal{T}_{t,x})) + \int_t^{\mathcal{T}_{t,x}} e^{-\beta s} \ell(\gamma_{t,x}(s)) ds$.
As a result,
\begin{align*}
\partial_x v & = -\beta e^{-\beta \mathcal{T}_{t,x}} (\partial_x \mathcal{T}_{t,x}) h(N(\mathcal{T}_{t,x}))+ e^{-\beta \mathcal{T}_{t,x}} (\partial_x \mathcal{T}_{t,x}) (h \circ N)'(\mathcal{T}_{t,x}) \\
& \quad \quad  + \int_t^{\mathcal{T}_{t,x}} e^{-\beta s} \frac{N(s)}{N(t)}\ell'(\gamma_{t,x}(s)) ds + e^{-\beta \mathcal{T}_{t,x}} (\partial_x \mathcal{T}_{t,x}) \ell(N(\mathcal{T}_{t,x})) \\
& = \int_t^{\mathcal{T}_{t,x}} e^{-\beta s} \frac{N(s)}{N(t)}\ell'(\gamma_{t,x}(s))  ds  \\
& \quad \quad  -e^{-\beta \mathcal{T}_{t,x}} \underbrace{(\partial_x \mathcal{T}_{t,x})}_{\le 0 \mbox{ \small by } \eqref{eq:310}}  \underbrace{(- \ell \circ N - (h \circ N)' + \beta h \circ N)}_{\ge 0 \mbox{ \small by } \eqref{eq:tech}}(\mathcal{T}_{t,x})  \\
& \ge 0.
\end{align*}
So, in both cases, we have $\partial_x v(t,x) \ge 0$.
%which yields  $\sup_{|\nu| \le \overline{\nu}}\{\nu \partial_x v\} = \overline{\nu} \partial_x v$.
Thus, $v(t,x)$ defined by \eqref{eq:vtx3} is a classical solution and hence, a viscosity solution to the HJB equation in \eqref{eq:HJB3}.
By Lemma \ref{lem:HJBvis}, we conclude $V(t,x) = v(t,x)$, and the optimal control is $\nu_{*}(s) = \overline{\nu}$ for $s \ge t$.
Specializing to $t = 0$ yields the results in Proposition \ref{thm:31} 
(and $\gamma(t)$ defined by \eqref{eq:gammat} is just $\gamma_{0,x}(t)$).
\end{proof}

%------------------------------------------------------------------------------------------------
\subsection{Main theorem and proof}
\label{sc:main}

We are now ready to present the main result of this section, the optimal solution to $U(x)$ in \eqref{eq:45}
 and hence to $U(x)$ in \eqref{eq:obj12}.
 
\begin{theorem}
\label{thm:41}
%Let Assumption \ref{assump:1} hold for the consumption-investment problem \eqref{eq:obj12}.
%Define $\widetilde{P}_\beta(t): = \mathbb{E}(e^{-\beta t} P(t))$ for $0 \le t \le T$ 
%to be the expectation of discounted price process.
Assume that $r \le \beta$, and $\widetilde{P}_{\beta}(t)$ in \eqref{eq:Pbeta} satisfies the Lipschitz condition:
\begin{equation}
\label{eq:LipP}
|\widetilde{P}_{\beta}(t) - \widetilde{P}_{\beta}(s)| \le C|t-s| \quad \mbox{for some } C >0.
\end{equation}
Then, $U(x) = v(0,x)$ where $v(t,x)$ is the unique viscosity solution to the following HJB equation,
where $Q: = \{(t,x): 0 \le t < T, \, 0<x<N(t)\}$:
\begin{equation}
\label{eq:HJB4}
\left\{ \begin{array}{lcl}
\partial_t v + e^{-\beta t} \ell(x) + \frac{x N'(t)}{N(t)} \partial_x v + \sup_{|\nu| \le \overline{\nu}} \{\nu ( \partial_x v - \widetilde{P}_{\beta}(t))\} = 0 \quad \mbox{in } Q , \\
v(T,x) = e^{-\beta T} h(x), \\
v(t,0) = e^{-\beta t} h(0), \,\, v(t, N(t)) = e^{-\beta t} h(N(t)).
\end{array}\right.
\end{equation}
Moreover, the optimal strategy is $b_{*}(t) = 0$ and $\nu_{*}(t) = \nu_{*}(t, X_{*}(t))$ for $0 \le t \le \mathcal{T}_{*}$, % (if it exists),
where $\nu_{*}(t,x)$ achieves the supremum in \eqref{eq:HJB4}, and $X_{*}(t)$ solves 
$X_*'(t) = \nu_*(t, X_*(t)) + \frac{N'(t)}{N(t)}X_*(t)$ with $X_*(0) = x$, and $\mathcal{T}_{*}: = \inf\{t>0: X_*(t) = 0 \mbox{ or } N(t)\} \wedge T$.
\end{theorem}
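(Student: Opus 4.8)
The plan is to follow the same route as for the stake-hoarding problem (Lemma~\ref{lem:HJBvis} and Proposition~\ref{thm:31}), now keeping the price term $\widetilde{P}_{\beta}(t)$. First I would invoke the decomposition already established in \eqref{eq:U}: since $\beta\ge r$ one has $\sup_b J_1(b)=0$ with the maximum attained at $b_*(t)\equiv 0$, so that $U(x)=\sup_\nu J_2(\nu)$ as in \eqref{eq:45}. As in \eqref{eq:obj23}, I would then embed this into the family of value-to-go functions
\begin{equation*}
V(t,x):=\sup_{\{\nu(s),\,s\ge t\}}\ \int_t^{\mathcal{T}}\big[-\widetilde{P}_{\beta}(s)\nu(s)+e^{-\beta s}\ell(X(s))\big]\,ds+e^{-\beta \mathcal{T}}h(X(\mathcal{T})),
\end{equation*}
subject to (C0), (C2$'$), (C3) with $X(t)=x$, so that $U(x)=V(0,x)$. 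Because $\widetilde{P}_{\beta}(t)=e^{-\beta t}\mathbb{E}P(t)$ is deterministic, this is a deterministic control problem, and the dynamic programming principle together with the standard argument in \cite[Chapter II, Section 7]{FS06} shows that $V$ is a viscosity solution of $\partial_t v+H(t,x,\partial_x v)=0$ on $Q$ with the terminal and boundary data in \eqref{eq:HJB4}, where now $H(t,x,p)=e^{-\beta t}\ell(x)+\frac{xN'(t)}{N(t)}p+\sup_{|\nu|\le\overline{\nu}}\{\nu(p-\widetilde{P}_{\beta}(t))\}=e^{-\beta t}\ell(x)+\frac{xN'(t)}{N(t)}p+\overline{\nu}\,|p-\widetilde{P}_{\beta}(t)|$.

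For uniqueness I would verify, as in the proof of Lemma~\ref{lem:HJBvis}, that $H$ satisfies the structural condition of \cite[Chapter II, Corollary 9.1]{FS06}, namely an estimate of the form \eqref{eq:uniqueLip}. The only term not already treated there is $\overline{\nu}\,|p-\widetilde{P}_{\beta}(t)|$; it is $1$-Lipschitz in $p$ and, thanks to the hypothesis \eqref{eq:LipP}, Lipschitz in $t$, so it contributes only the $|p-q|$ and $|t-s|$ terms of \eqref{eq:uniqueLip}. Together with the bounds coming from Assumption~\ref{assump:1} on the other two terms, this yields \eqref{eq:uniqueLip} for the full Hamiltonian, hence the comparison principle and uniqueness of the viscosity solution to \eqref{eq:HJB4}; therefore $V=v$ and $U(x)=v(0,x)$.

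It remains to identify the optimal strategy. The supremum defining $H$ is attained at $\nu=\overline{\nu}\,\sgn(p-\widetilde{P}_{\beta}(t))$, which gives the bang-bang feedback $\nu_*(t,x)=\overline{\nu}\,\sgn(\partial_x v(t,x)-\widetilde{P}_{\beta}(t))$; together with $b_*(t)\equiv 0$ from the reduction above, this is the claimed optimal control. Existence of an optimizer I would obtain from compactness: the control set $[-\overline{\nu},\overline{\nu}]$ is compact and convex and both the state dynamics and the running cost are affine in $\nu$, so the set of admissible trajectories is compact and $J_2$ is upper semicontinuous along it (a Filippov-type argument); by the dynamic programming principle any optimal control realizes the supremum in \eqref{eq:HJB4} along the optimal trajectory $X_*$, which then solves the stated closed-loop ODE, with $\mathcal{T}_*$ the corresponding exit time. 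The main obstacle is precisely this last point: the viscosity solution $v$ need not be $\mathcal{C}^1$, so the feedback $\nu_*(t,x)$ is a priori well defined only where $\partial_x v$ exists and differs from $\widetilde{P}_{\beta}(t)$, and closing the loop requires a measurable selection together with some care about non-uniqueness of the $X_*$-ODE on the singular set; the clean assertion is that an optimal open-loop $\nu_*$ exists whose value equals $v(0,x)$ and which coincides with the feedback form wherever $v$ is differentiable. The precise switching structure of this bang-bang control is then what Section~\ref{sc:linear} pins down in the linear and convex cases, where $\partial_x v$ can be computed in closed form.
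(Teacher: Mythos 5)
Your proposal is correct and follows essentially the same route as the paper: reduce to $\sup_\nu J_2(\nu)$ via $b_*\equiv 0$, embed into the value-to-go function $V_2(t,x)$, show it is a viscosity solution of \eqref{eq:HJB4} by the standard dynamic programming argument, and obtain uniqueness by checking that the Lipschitz condition \eqref{eq:LipP} makes the Hamiltonian satisfy the structure condition \eqref{eq:uniqueLip} of \cite[Chapter II, Corollary 9.1]{FS06}. Your closing discussion of the feedback $\nu_*=\overline{\nu}\,\sgn(\partial_x v-\widetilde{P}_\beta)$ and the measurable-selection/regularity caveats is in fact more careful than the paper, which dismisses this step as ``straightforward.''
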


\begin{proof}
Similar to the dynamic programming/HJB approach that proves Lemma \ref{lem:HJBvis} and Proposition \ref{thm:31} above, 
here we consider
\begin{equation}
\label{eq:obj13}
\begin{aligned}
V_2(t,x) := & \max_{\nu(s)} \,\,   \int_t^{\mathcal{T}} (-\widetilde{P}_\beta(s) \nu(s) + e^{-\beta s} \ell(X(s)) ds + e^{-\beta \mathcal{T}}h(X(\mathcal{T}))\\
& \mbox{ subject to } X'(s) = \nu(s) + \frac{N'(s)}{N(s)} X(s), \, X(t) = x,  \nonumber\\
& \qquad \qquad \quad \,  0 \le X(s) \le N(s), \nonumber\\
& \qquad \qquad \quad  \, |\nu(s)| \le \overline{\nu},\nonumber
\end{aligned}
\end{equation}
so that $U(x) = V_2(0,x)$.
By the same dynamic programming argument as above,  
$V_2$ solves in the viscosity sense the HJB equation in \eqref{eq:HJB4}, which can be expressed 
%We write the HJB equation 
as $\partial_t v + H(t,x,\partial_x v) = 0$, with
\begin{equation*}
H(t,x,p): = e^{-\beta t} \ell(x) + \frac{x N'(t)}{N(t)} p + \sup_{|\nu| \le \overline{\nu}}\{\nu(p - \widetilde{P}_{\beta}(t))\}.
\end{equation*}
It is readily checked that under Assumption \ref{assump:1} and the Liptschiz condition in \eqref{eq:LipP}, 
the inequality in \eqref{eq:uniqueLip} holds. 
Thus, $V_2$ as identified above is the unique viscosity to the HJB equation in \eqref{eq:HJB4}.
The rest of the theorem is straightforward.
\end{proof}

%which also involves the discounted price process $\widetilde{P}_\beta(t)$.
\quad Comparing the HJB equations in \eqref{eq:HJB3} and in \eqref{eq:HJB4}, we see the nonlinear term 
changes from $\sup_{|\nu| \le \overline{\nu}}\{\nu \partial_x v\}$ in the stake-hoarding problem,
to $\sup_{|\nu| \le \overline{\nu}}\{\nu (\partial_x v - \widetilde{P}_\beta(t))\}$ in the stake-trading problem, the 
latter being the general consumption-investment problem.  
%underlying the problem \eqref{eq:obj12}.
The more general HJB equation in \eqref{eq:HJB4} does not have a closed-form solution,  
and neither does the optimal trading strategy $\nu_{*}(t)$.
This calls for numerical methods; see e.g. \cite{OS91, Sou85}.

%In Section \ref{sc:linear} below, we shall consider the special case of both utility functions $\ell(\cot)$
%and $h(\cdot)$ being linear functions, 
%for which the HJB equation \eqref{eq:HJB4} can be explicitly solved, leading to 
%a closed-form expression for the optimal strategy $\nu_*(t)$ (in terms of $\widetilde{P}_\beta(t)$).

%------------------------------------------------------------------------------------------------
\section{Linear and Convex Utilities}
\label{sc:linear}

\subsection{Linear utility}
\label{sc:linear1}

\quad Consider the special case of linear utility, $\ell(x) = \ell x$ and $h(x) = hx$, for some 
given (positive) constants $\ell$ and $h$. 
In this case we can derive a closed-form solution to the HJB equation in \eqref{eq:HJB4}, 
and then derive the optimal strategy $\nu_*(t)$ (in terms of $\widetilde{P}_\beta(t)$).
%At the end of this section, we discuss the extension to convex utility functions. 

\quad To start with,  the HJB equation in \eqref{eq:HJB4} now specializes to the following,
with $Q: = \{(t,x): 0 \le t < T, \, 0<x<N(t)\}$
(as before, refer to Lemma \ref{lem:HJBvis}):
\begin{equation}
\label{eq:HJB42}
\left\{ \begin{array}{lcl}
\partial_t v + \ell e^{-\beta t} x + \frac{x N'(t)}{N(t)} \partial_x v + \sup_{|\nu| \le \overline{\nu}} \{\nu ( \partial_x v - \widetilde{P}_{\beta}(t))\} = 0 \quad (t,x)\in Q, \\
v(T,x) = hx, \\
v(t,0) = 0, \, v(t, N(t)) = hN(t).
\end{array}\right.
\end{equation}
For the nonlinear term $\sup_{|\nu| \le \overline{\nu}}\{\nu (\partial_x v - \widetilde{P}_\beta(t))\}$,
we have $\nu_*(t,x) = \overline{\nu}$ if $\partial_x v(t,x) \ge \widetilde{P}_\beta(t)$, 
and $\nu_*(t,x) = \overline{\nu}$ if $\partial_x v(t,x) < \widetilde{P}_\beta(t)$.

\quad Next, presuming that $\partial_x v \ge \widetilde{P}_{\beta}(t)$, and ignoring the boundary conditions,
the HJB equation in \eqref{eq:HJB42} becomes 
\begin{equation*}
\partial_t v + \ell e^{-\beta t}x -\overline{\nu} \widetilde{P}_\beta(t) + \left(\overline{\nu} + \frac{x N'(t)}{N(t)} \right) \partial_x v = 0, \quad v(T,x) = hx,
\end{equation*}
which has the (classical) solution
\begin{equation}
\label{eq:v+}
v^+(t,x) := h e^{-\beta T} \gamma^+_{t,x}(T) +  
\int_t^T  \left[\ell e^{-\beta s} \gamma^+_{t,x}(s) - \overline{\nu} \widetilde{P}_\beta(s)\right] ds,
\end{equation}
where
\begin{equation}
\label{eq:g+}
\gamma^+_{t,x}(s): = \overline{\nu} N(s) \int_t^s \frac{du}{N(u)} + \frac{x N(s)}{N(t)},  \quad s \in [t, T]. 
\end{equation}
Similarly,  presuming that $\partial_x v < \widetilde{P}_\beta(t)$ and neglecting the boundary conditions turns
the HJB equation in \eqref{eq:HJB4} into the following form: 
\begin{equation*}
\partial_t v + \ell e^{-\beta t}x  + \overline{\nu} \widetilde{P}_\beta(t) + \left(-\overline{\nu} + \frac{x N'(t)}{N(t)} \right) \partial_x v = 0, \quad v(T,x) = hx,
\end{equation*}
which has the solution
\begin{equation}
\label{eq:v-}
v^{-}(t,x) := h e^{-\beta T} \gamma_{t,x}^{-}(T) +  \int_t^T \left[\ell e^{-\beta s} \gamma^-_{t,x}(s) + \overline{\nu} \widetilde{P}_\beta(s)\right] ds, 
\end{equation}
where
\begin{equation}
\label{eq:g-}
\gamma^-_{t,x}(s): = -\overline{\nu} N(s) \int_t^s \frac{du}{N(u)} + \frac{x N(s)}{N(t)}, \quad s\in [t , T]. 
\end{equation}

\quad The key observation is that
\begin{equation}
\label{eq:410}
\partial_x v^+(t,x) = \partial_x v^{-}(t,x) = \underbrace{\frac{1}{N(t)} \left(h e^{-\beta T} N(T) + \ell \int_t^T  e^{-\beta s} N(s) ds \right)}_{:= \Psi(t)} ;
\end{equation}
and $\Psi(t)$, notably independent of $x$, 
is decreasing in $t\in [0,T]$:
\begin{eqnarray}
\label{eq:psi}
\Psi(0) = he^{-\beta T} \frac{N(T) }{N} + \frac{\ell}{N} \int_0^T e^{-\beta t} N(t) dt 
\;\downarrow (\ge)\; \Psi(t) \; \downarrow (\ge)\; 
%\quad {\rm to}\quad
 \Psi(T) = he^{-\beta T}.
 \end{eqnarray}
This suggests that
$\nu_*(t) = \overline{\nu}$ (buy all the time) if $\sup_{[0,T]} \widetilde{P}_\beta(t) \le \Psi(T)$;
and $\nu_*(t) = -\overline{\nu}$ (sell all the time) if $\inf_{[0,T]} \widetilde{P}_\beta(t) \ge \Psi(0)$.
Various other scenarios are also possible, 
such as first buy then sell, or first sell then buy, and so forth.

\quad The following proposition classifies all possible optimal strategies corresponding to 
$\widetilde{P}_\beta(t)$ as specified above, 
which we will comment on later.

\begin{proposition}
\label{coro:class}
Let $\ell(x) = \ell x$ and $h(x) = h x$ with $\ell, h > 0$, and $N(t)$ satisfy Assumption \ref{assump:1} (i).
Assume that $\widetilde{P}_\beta(t)$ satisfies the Lipschitz condition in \eqref{eq:LipP}, and that $ \overline{\nu}$ satisfies
the following:
\begin{equation}
\label{eq:toend}
%\frac{x}{N} \le \overline{\nu} \int_0^T \frac{dt}{N(t)} \le \frac{N-x}{N}.
\overline{\nu} \int_0^T \frac{dt}{N(t)} \le \frac{x}{N} \wedge \frac{N-x}{N}.
\end{equation}
Then, the following results hold:
%\begin{enumerate}[itemsep = 3 pt]
\begin{itemize}
\item[(i)]
Suppose $\widetilde{P}_\beta(t)$ stays constant, i.e., for all $t\in [0,T]$, 
 $\widetilde{P}(t) =\widetilde{P}(0)= P(0)$. 
\begin{enumerate}[itemsep = 3 pt]
\item[(a)]
If $P(0) \ge \Psi(0)$, 
then $\nu_*(t) = - \overline{\nu}$ for all $0 \le t \le T$. 
That is, the participant sells at all time at full capacity.
\item[(b)]
If $P(0) \le \Psi(T)$, then $\nu_*(t) = \overline{\nu}$.
That is, the participant purchases at all time at full capacity.
\item[(c)]
If $\Psi(T) < P(0) < \Psi(0)$,
then 
\begin{equation*}
\nu_*(t)  = \left\{ \begin{array}{lcl}
\overline{\nu} & \mbox{for } t \le t_0, \\
-\overline{\nu} & \mbox{for } t > t_0,
\end{array}\right.
\end{equation*}
where $t_0$ is the unique point in $[0,T]$ such that $P(0) =\Psi(t_0)$ with $\Psi(t)$ defined in \eqref{eq:410}.
That is, the participant first buys and after some time sells, both at full capacity.
\end{enumerate}
\item[(ii)]
Suppose that $\widetilde{P}_\beta(t)$ is increasing in $t\in [0,T]$.
\begin{enumerate}[itemsep = 3 pt]
\item[(a)]
If $P(0) \ge \Psi(0)$, 
then $\nu_*(t) = - \overline{\nu}$ for all $0 \le t \le T$. 
That is, the participant sells all the time at full capacity.
\item[(b)]
If $\widetilde{P}_\beta(T) \le \Psi(T)$, then $\nu_*(t) = \overline{\nu}$.
That is, the participant purchases all the time at full capacity.
\item[(c)]
If $P(0) < \Psi(0)$ and $\widetilde{P}_\beta(T) > \Psi(T)$,
then 
\begin{equation*}
\nu_*(t)  = \left\{ \begin{array}{lcl}
\overline{\nu} & \mbox{for } t \le t_0, \\
-\overline{\nu} & \mbox{for } t > t_0,
\end{array}\right.
\end{equation*}
where $t_0$ is the unique point of intersection of $\widetilde{P}_\beta(t)$ 
and $\Psi(t)$ on $[0,T]$.
That is, the participant first buys and after some time sells, both at full capacity.
\end{enumerate}
\item[(iii)]
Suppose that $\widetilde{P}_\beta(t)$ is decreasing  in $t\in [0,T]$.
\begin{enumerate}[itemsep = 3 pt]
\item[(a)]
If $P(0) \ge \Psi(0)$, then the participant first sells, and may then buy, etc, always (buy or sell) at full capacity,
according to the crossings of $\widetilde{P}_\beta(t)$ and $\Psi(t)$ in $[0,T]$.
\item[(b)]
If $P(0) < \Psi(0)$, then the participant first buys, and may then sell, etc, always (buy or sell) at full capacity,
according to the crossings of $\widetilde{P}_\beta(t)$ and $\Psi(t)$ in $[0,T]$.
\end{enumerate}
\end{itemize}
%\end{enumerate}
\end{proposition}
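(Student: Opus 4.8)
The plan is to leverage the key identity \eqref{eq:410}, namely that $\partial_x v^+(t,x)=\partial_x v^-(t,x)=\Psi(t)$ does not depend on $x$. This collapses the feedback rule implicit in \eqref{eq:HJB42} --- take $\nu_*=\overline{\nu}$ when $\partial_x v\ge\widetilde{P}_{\beta}(t)$ and $\nu_*=-\overline{\nu}$ otherwise --- into a \emph{purely time-dependent} switching rule, $\nu_*(t)=\overline{\nu}$ on $\{t:\Psi(t)\ge\widetilde{P}_{\beta}(t)\}$ and $\nu_*(t)=-\overline{\nu}$ on $\{t:\Psi(t)<\widetilde{P}_{\beta}(t)\}$, after which parts (i)--(iii) are just a matter of locating the sign changes of $\Psi-\widetilde{P}_{\beta}$. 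Two preliminary facts are needed. First, dividing (C0) by $N$ gives $(X/N)'=\nu/N$, so $X(s)/N(s)=x/N+\int_0^s\nu(u)/N(u)\,du$ for every admissible control; together with \eqref{eq:toend} this yields $|X(s)/N(s)-x/N|\le\overline{\nu}\int_0^T du/N(u)\le\frac{x}{N}\wedge\frac{N-x}{N}$, hence $0\le X(s)\le N(s)$ on $[0,T]$ with strict inequalities for $s<T$, so $\mathcal{T}=T$ and the lateral boundary conditions in \eqref{eq:HJB42} are inactive. Second, $\Psi(t)=\frac{1}{N(t)}\big(he^{-\beta T}N(T)+\ell\int_t^T e^{-\beta s}N(s)\,ds\big)$ is continuous and \emph{strictly} decreasing on $[0,T]$ (differentiating, $\Psi'(t)=-\tfrac{N'(t)}{N(t)}\Psi(t)-\ell e^{-\beta t}<0$), with $\Psi(T)=he^{-\beta T}$, sharpening \eqref{eq:psi}.

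For the core step --- that this switching rule is optimal --- I would argue directly: substituting $X(s)=N(s)\big(x/N+\int_0^s\nu(u)/N(u)\,du\big)$ into $J_2(\nu)$ from \eqref{eq:42} (with $\mathcal{T}=T$) and applying Fubini to the resulting double integral reduces the objective to the affine functional $J_2(\nu)=x\,\Psi(0)+\int_0^T\nu(u)\big(\Psi(u)-\widetilde{P}_{\beta}(u)\big)\,du$. Under the box constraint $|\nu(u)|\le\overline{\nu}$ this is maximised pointwise by $\nu_*(u)=\overline{\nu}\,\sgn\big(\Psi(u)-\widetilde{P}_{\beta}(u)\big)$ (the value on $\{\Psi=\widetilde{P}_{\beta}\}$ being immaterial), giving $U(x)=x\,\Psi(0)+\overline{\nu}\int_0^T|\Psi(u)-\widetilde{P}_{\beta}(u)|\,du$. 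Alternatively, staying within the HJB framework already set up: glue $v^+$ over $\{\Psi\ge\widetilde{P}_{\beta}\}$ and $v^-$ over $\{\Psi<\widetilde{P}_{\beta}\}$; by \eqref{eq:410} the $x$-derivative matches across each crossing time $t_0$, and there $\Psi(t_0)=\widetilde{P}_{\beta}(t_0)$ makes the $\sup_{|\nu|\le\overline{\nu}}$ term vanish so the one-sided $t$-derivatives match too, whence the glued $v$ is a viscosity solution of \eqref{eq:HJB42} (classical at transversal crossings); uniqueness via Theorem \ref{thm:41} --- whose estimate \eqref{eq:uniqueLip} still holds for linear utilities on the bounded state domain, and \eqref{eq:LipP} holds by hypothesis --- identifies it with the value function.

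It then remains to read off the cases from the sign of $\Psi-\widetilde{P}_{\beta}$. In (i), $\widetilde{P}_{\beta}\equiv P(0)$, so $\Psi-\widetilde{P}_{\beta}$ is strictly decreasing from $\Psi(0)-P(0)$ to $\Psi(T)-P(0)$: it is $\le0$ throughout when $P(0)\ge\Psi(0)$ (sell always), $\ge0$ throughout when $P(0)\le\Psi(T)$ (buy always), and otherwise has a unique zero $t_0$ solving $\Psi(t_0)=P(0)$, positive before and negative after (buy then sell). In (ii), $\widetilde{P}_{\beta}$ increasing keeps $\Psi-\widetilde{P}_{\beta}$ strictly decreasing, so the same trichotomy holds with the endpoints compared at $t=0$ ($P(0)$ vs.\ $\Psi(0)$) and $t=T$ ($\widetilde{P}_{\beta}(T)$ vs.\ $\Psi(T)$), and $t_0$ the unique crossing of $\widetilde{P}_{\beta}$ and $\Psi$. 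In (iii), $\widetilde{P}_{\beta}$ decreasing leaves $\Psi-\widetilde{P}_{\beta}$ with no forced monotonicity and possibly many sign changes; only the initial sign is pinned down, by comparing $P(0)=\widetilde{P}_{\beta}(0)$ with $\Psi(0)$, giving ``first sell, then possibly buy, $\ldots$'' when $P(0)\ge\Psi(0)$ and ``first buy, then possibly sell, $\ldots$'' when $P(0)<\Psi(0)$, with subsequent switches exactly at the crossings of $\widetilde{P}_{\beta}$ and $\Psi$.

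The main obstacle is the core step. In the direct argument the only real care is in establishing $\mathcal{T}=T$ for \emph{all} admissible controls (the first preliminary fact), without which the substitution and Fubini step would not be over a common fixed horizon; after that the optimisation is a one-line linear program. In the HJB route the delicate point is instead the regularity of the glued function across the switching set $\{\Psi=\widetilde{P}_{\beta}\}$ --- smooth at transversal crossings, but needing the viscosity-solution machinery when $\widetilde{P}_{\beta}$ is merely Lipschitz or when that set has positive measure --- together with re-checking that \eqref{eq:toend} renders the lateral boundary data inactive. The case analysis in (i)--(iii) is then routine given strict monotonicity of $\Psi$ and the prescribed monotonicity of $\widetilde{P}_{\beta}$.
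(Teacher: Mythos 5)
Your proposal is correct, and your primary argument takes a genuinely different---and arguably cleaner---route than the paper's. The paper stays inside the HJB framework: having computed $\partial_x v^+=\partial_x v^-=\Psi(t)$ in \eqref{eq:410}, it notes that \eqref{eq:toend} forces $\mathcal{T}_*=T$, asserts that the optimal control is then read off from $\sup_{|\nu|\le\overline{\nu}}\{\nu[\Psi(t)-\widetilde{P}_\beta(t)]\}$, and carries out exactly the sign analysis of $\Psi-\widetilde{P}_\beta$ that you describe in parts (i)--(iii); this is essentially your ``alternative'' gluing route, and the paper does not spell out the verification that the glued $v^{\pm}$ is indeed the value function. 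Your direct argument---integrating $(X/N)'=\nu/N$, substituting into $J_2$, and applying Fubini to exhibit $J_2(\nu)=x\,\Psi(0)+\int_0^T\nu(u)\bigl(\Psi(u)-\widetilde{P}_\beta(u)\bigr)\,du$---bypasses the viscosity-solution machinery entirely, reduces the problem to a pointwise linear program, and yields the closed form $U(x)=x\,\Psi(0)+\overline{\nu}\int_0^T\bigl|\Psi(u)-\widetilde{P}_\beta(u)\bigr|\,du$ as a bonus; it also makes rigorous the optimality step the paper leaves implicit (that the time-dependent switching rule is optimal among \emph{all} admissible open-loop controls, not just feedback ones). Your observation that \eqref{eq:toend} keeps every admissible trajectory strictly inside $(0,N(s))$ for $s<T$, so that $\mathcal{T}=T$ uniformly in $\nu$ and the Fubini exchange is over a fixed horizon, is precisely the point requiring care, and your strict monotonicity of $\Psi$ (from $\Psi'=-\tfrac{N'}{N}\Psi-\ell e^{-\beta t}<0$) is what justifies the uniqueness of the crossing time $t_0$ in cases (i)(c) and (ii)(c). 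The case analysis itself coincides with the paper's.
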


\begin{proof}
Recall that $X_*(t)$ is the state process (number of stakes) corresponding to the optimal strategy $\nu_*(t)$, 
which, as stipulated in the rest of the proposition, will be equal to either $\overline{\nu}$ or $-\overline{\nu}$.
The condition in \eqref{eq:toend} then ensures that  
$0 \le X_*(t) \le N(t)$ for all $t\in [0, T]$, 
so $\mathcal{T}_* = T$ (i.e., there will no forced early exit).

Thus, it suffices to find the optimal strategy $\nu_*(t)$ from 
 $$\sup_{|\nu| \le \overline{\nu}}\{\nu [\partial_x v - \widetilde{P}_\beta(t)]\}
 =\sup_{|\nu| \le \overline{\nu}}\{\nu [\Psi(t) - \widetilde{P}_\beta(t)]\}.$$
 
%\bl{
(i) and (ii).  Since $\Psi(t)$ is decreasing and $\widetilde{P}(t)$ is either constant or increasing, 
 $\Psi(t)-P(t)$ is decreasing. Hence, we have the following cases (for both (i) and (ii)).
 %}
 
(a) If $\widetilde{P}(0) =P(0) \ge \Psi(0)$, then $\widetilde{P}(t) \ge \Psi(t)$
 for all $t\in[0, T]$; hence,
% $$\sup_{|\nu| \le \overline{\nu}}\{\nu [\partial_x v^- - P(0)]\} = -\overline{\nu}[\partial_x v^- - P(0)] % \ge 0;$$
%=-\overline{\nu}[ \Psi(t) - P(0) ]\ge 0,$$
%which implies that 
$\nu_*(t) = - \overline{\nu}$, and $U(x) = v^-(0,x)$.

(b) Similarly, if $P(0) \le \Psi(T)$, then $\widetilde{P}(t) \le \Psi(t)$ for all $t\in[0, T]$; hence,
%then $\partial_x v^+ - P(0) = \Psi(t) - P(0) \ge 0$ for all $t$ (and $x$).
%Thus, 
%$$\sup_{|\nu| \le \overline{\nu}}\{\nu [\partial_x v^+ - P(0)]\} = \overline{\nu}[\partial_x v^+ - P(0)] =\overline{\nu}[ \Psi(T) - P(0) ]
%\ge 0, $$
%which implies that 
$\nu_*(t) = \overline{\nu}$, and $U(x) = v^+(0,x)$.

(c) Otherwise, there will be a unique point for $\Psi(t)-P(t)$ (which is decreasing in $t$) to cross $0$ from above, and let $t_0\in [0,T]$
denote the crossing point.  
%If $\Psi(T) < P(0)=\Psi(t_0) < \Psi(0)$, then $\partial_x v^+=\Psi(t) \ge P(0)$ for $t\le t_0$, and 
%$\partial_x v^-=\Psi(t) \le P(0)$ for $t > t_0$; and hence, 
%$$\partial_x v^+ - P(0) = \Psi(t) - P(0) \ge 0, \quad t \le t_0; \qquad
%\partial_x v^- - P(0) = \Psi(t) - P(0) \le 0, \quad t  > t_0.$$
%Hence,
%\begin{eqnarray*}
%&\sup_{|\nu| \le \overline{\nu}}\{\nu [\partial_x v^+ - P(0)]\} = \overline{\nu}[\partial_x v^+ - P(0)] =\overline{\nu}[ \Psi(t) - P(0) ]\ge 0,
%\quad t \le t_0; \\
%&\sup_{|\nu| \le \overline{\nu}}\{\nu [\partial_x v^- - P(0)]\} = -\overline{\nu}[\partial_x v^- - P(0)]=-\overline{\nu}[ \Psi(t) - P(0) ]\ge 0,
%\quad  t > t_0.
%\end{eqnarray*}
This implies that $\nu_*(t) = \overline{\nu}$ for $t \le t_0$, and $\nu_*(t) =-\overline{\nu}$ for $t > t_0$;
and 
$$U(x) = v^+(0,x) - v^+(t_0, \gamma^+_{0,x}(t_0)) + v^-(t_0, \gamma^+_{0,x}(t_0)).$$
Part (iii) is similarly argued, the only complication is that $\Psi(t)-P(t)$ is now non-monotone, and hence, there 
will be multiple points when it crosses $0$.
\end{proof}

\quad Several remarks are in order. 
First note that the condition in \eqref{eq:toend} is to guarantee the constraint (C2') not activated prior to $T$;
that is, to exclude the possibility of monopoly/dictatorship that will trigger a forced early exit.
This condition may well be removed, 
but then we would expect another condition similar to the one in \eqref{eq:tech}
to guarantee the optimality of a strategy when an early exit occurs.

\quad Second, $\widetilde{P}_\beta(t) = \mathbb{E}\big[ e^{-\beta t} P(t)\big]$
combines  $\beta$, which measures the participant's sensitivity towards risk, with the stake price $P(t)$. 
Thus, the monotone properties of $\widetilde{P}_\beta(t)$, which
classify the three parts (i)-(iii) in Proposition \ref{coro:class}, 
naturally connect to  martingale pricing:
$\widetilde{P}_\beta(t)$ being a constant  in (i) makes the process $e^{-\beta t} P(t)$ a martingale;
whereas $\widetilde{P}_\beta(t)$ increasing or decreasing, respectively in (ii) and (iii),
makes $e^{-\beta t} P(t)$ a sub-martingale or a super-martingale.

\quad On the other hand, the function $\Psi(t)=\partial_x v^+ (t,x) =\partial_x v^-(t,x)$ 
represents the rate of return of the participant's utility (from holding of stakes, $x$);
and interestingly, in the linear utility case, this return rate is independent of $x$ while decreasing in $t$.
Thus, the trading strategy is completely determined by comparing this return rate $\Psi(t)$ with the participant's
risk-adjusted stake price (or, valuation) $\widetilde{P}_\beta(t)$:
if $\Psi(t)\ge ({\rm resp.} <) \widetilde{P}_\beta(t)$, then the participant will buy (resp.\ sell) stakes.
%This explains the intuition behind all the trading strategies in Proposition \ref{coro:class}.

\quad Specifically, following (i) and (ii) of Proposition \ref{coro:class},
% According to the above proposition, 
for a constant or an increasing $\widetilde{P}_\beta(t)$
(corresponding to a risk-neutral or risk-seeking participant),
there are only three possible optimal strategies:
buy all the time, sell all the time, or first buy then sell.
(The first-buy-then-sell strategy echoes the general investment practice that
an early investment pays off in a later day.) 
See Figure \ref{fig:2} for an illustration.
\begin{figure}[htb]
    \centering
\begin{subfigure}{0.5\textwidth}
  \includegraphics[width=\linewidth]{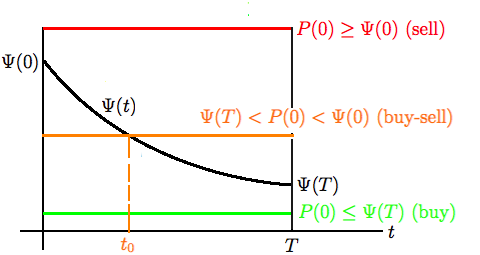}
\end{subfigure}\hfil
\begin{subfigure}{0.5\textwidth}
  \includegraphics[width=\linewidth]{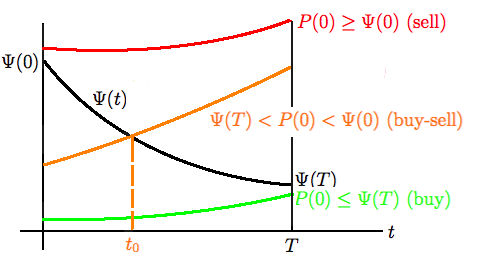}
\end{subfigure}\hfil
\caption{Optimal stake trading with linear $\ell(\cdot), h(\cdot)$ when $\widetilde{P}_\beta(t)$ is constant (left)
and $\widetilde{P}_\beta(t)$ is increasing (right).}
\label{fig:2}
\end{figure}

%\medskip
%{\color{blue} In fact, it suffices to have $\overline{\nu} \int_0^T \frac{dt}{N(t)} \le \frac{x}{N}$ to assure that
%$X(t) \ge 0$ for $0 \le t \le T$ given any feasible strategy.
%It is easy to see that the most aggressive downward strategy is to take $\nu(t) = - \overline{\nu}$.
%In this case, the dynamics of $X(t)$ is just 
%\begin{equation*}
%\gamma^{-}(t) = \gamma^{-}_{0,x}(t) = \left( \frac{x}{N} - \overline{\nu} \int_0^t \frac{ds}{N(s)} \right) N(t).
%\end{equation*}
%Obviously, $N(t) \ge 0$. 
%So it suffice to have $ \frac{x}{N} - \overline{\nu} \int_0^t \frac{ds}{N(s)} \ge 0$ for all $t$,
%which is equivalent to $\int_0^T \frac{ds}{N(s)} \le \frac{x}{N}$.
%This condition will imply for any feasible $X(t)$ (with $X(0)=x$),
%$\gamma^{-}_{t, X(t)} \ge 0$ (but not $\gamma^{-}_{t, x} \ge 0$ for any $x$),
%which is enough for the development.
%}

\subsection{A special case}

%\quad
%\bl{Second, 
In part (iii) of Proposition \ref{coro:class}, when $\widetilde{P}_\beta(t)$ is decreasing in $t$, like $\Psi (t)$,
the multiple crossings between the two decreasing functions can be further pinned down when there's more model structure.
 %$\widetilde{P}_\beta(t)$ and $\Psi (t)$, the latter being decreasing as well.
%Thus,   % (corresponding to a risk averse participant),
%there will be multiple rounds of buy and sell, corresponding to the crossings.
% of $\Psi(t)$ and $\widetilde{P}_\beta(t)$.
%If we assume further that $N(t) = N_{\alpha}(t)$ defined by \eqref{eq:Nal},
%Yet, with more problem structure, this can still be readily pinned down. 
%
%\quad
Consider, for instance, when $P(t)$ follows a geometric Brownian motion (GBM):
\begin{equation}
\label{eq:gbm}
%\begin{equation*}
\frac{dP(t)}{P(t)}=\mu dt +\sigma dB_t, \quad{\rm or}\quad
P(t) = P(0) e^{(\mu - \sigma^2/2) t + \sigma B_t} ;  \quad t \in[0, T],
\end{equation}
where $\{B_t\}$ denotes the standard Brownian motion; and
$\mu> 0$ and  $\sigma > 0$ are the two parameters of the GBM model, representing the rate
of return and the volatility of $\{P(t)\}$.
%Note that $\mu\ge r$ is often assumed, to account for the risk premium of $\{P(t)\}$ 
%(over the risk-free asset), which is certainly the case for cryptocurrency.
%On the other hand, $\mu$ may be higher or lower than $\beta:=\beta_k$, the latter
%measuring participant $k$'s risk sensitivity. 
%(But as before, we will continue to assume that $\beta \ge r$.)
%
%\quad 
From the second equation in \eqref{eq:gbm}, we have $\mathbb{E} P(t) = P(0) e^{\mu t}$; hence,
 $\widetilde{P}_\beta(t) = P(0) e^{-( \beta-\mu)t}$.
% 
%Furthermore, consider $N(t)$ as  $N_\alpha(t)$ defined in \eqref{eq:Nal}, and 
%denote $\Psi_\alpha(t)$ for $\Psi(t)$ corresponding to $N_\alpha(t)$.
%Then, the crossover points between $\Psi_\alpha (t)$ and $\widetilde{P}_\beta(t)$
%are readily captured, and the optimal trading strategy can be determined accordingly.
%
%\quad 
Then, a decreasing $\widetilde{P}_\beta(t)$ %as in part (iii) of Proposition \ref{coro:class},
corresponds to $\beta > \mu$.
%In this case, $\widetilde{P}_\beta(t)$ is convex (as well as decreasing) in $t\in [0,T]$.}
From \eqref{eq:410}, we can derive
%{\color{red}
\begin{equation*}
\Psi'(t) = -\frac{N'(t)}{N(t)} \Psi (t) - \ell e^{-\beta t}, 
\end{equation*}
and hence, 
\begin{equation}
\label{eq:psider}
\left(\Psi(t) -\widetilde{P}_\beta(t)\right)'= -\frac{N'(t)}{N(t)} \Psi (t) - \ell e^{-\beta t} + (\beta - \mu) P(0) e^{-(\beta - \mu)t}.
\end{equation}
Let $\Psi_\alpha(t)$ denote $\Psi(t)$ for $N(t) = N_\alpha(t)$ defined by \eqref{eq:Nal}.
The following proposition gives the conditions under which $\Psi_\alpha(t) - \widetilde{P}_\beta(t)$ is monotone
in the regime $N \to \infty$,
and optimal strategies are derived accordingly. 

\begin{proposition}
Suppose the assumptions in Proposition \ref{coro:class} hold, 
with $N(t) = N_\alpha(t)$ and $\{P (t)\}$ specified by \eqref{eq:gbm} with $\beta > \mu$.
As $N \to \infty$, we have the following results: 
%\begin{enumerate}
\begin{itemize}
\item
If for some $\varepsilon > 0$,
\begin{equation}
\label{eq:diffinc}
P(0) > \frac{1}{\beta - \mu} \left(\frac{\alpha h e^{-\mu T} (N^{\frac{1}{\alpha}} +T)^{\alpha}}{N^{1+ \frac{1}{\alpha}}} + \frac{\alpha \ell \beta^{-1}}{N^{\frac{1}{\alpha}}} + \ell \right) + \frac{\varepsilon}{N^{\frac{1}{\alpha}}},
\end{equation}
then $\Psi_\alpha(t) - \widetilde{P}_\beta(t)$ is increasing on $[0,T]$.
\item
If for some $\varepsilon > 0$,
\begin{equation}
\label{eq:diffdec}
P(0) < \frac{1}{\beta - \mu}\left(\frac{\alpha h e^{-\beta T}}{N^{\frac{1}{\alpha}}+T} + \ell e^{-\mu T} \right) - \frac{\varepsilon}{N^{\frac{1}{\alpha}}},
\end{equation}
then 
$\Psi_\alpha(t) - \widetilde{P}_\beta(t)$ is decreasing on $[0,T]$.
\end{itemize}
%\end{enumerate}
Consequently, we have:
\begin{enumerate}
\item[(a)]
If $P(0) > e^{(\beta - \mu)T} \Psi_\alpha(T)$ and \eqref{eq:diffinc} holds,
or $P(0) > \Psi_\alpha(0)$ and \eqref{eq:diffdec} holds,
then $\nu_*(t) = - \overline{\nu}$ for all $t$.
That is, the participant sells all the time at full capacity. 
\item[(b)]
If $\Psi_\alpha(0) \le P(0) < e^{(\beta -\mu)T} \Psi_\alpha(T)$ and \eqref{eq:diffinc} holds,
then $\nu_*(t) = - \overline{\nu}$ for $t \le t_0$ and $\nu_*(t) =\overline{\nu}$ for $t > t_0$,
where $t_0$ is the unique point of intersection of $\widetilde{P}_\beta(t)$ 
and $\Psi_\alpha (t)$ on $[0,T]$.
That is, the participant first sells (before $t_0$) and then buys (after $t_0$), both at full capacity.
\item[(c)]
If $e^{(\beta - \mu)T} \Psi_\alpha(T) \le P(0) < \Psi_\alpha(0)$ and \eqref{eq:diffdec} holds,
then $\nu_*(t) = \overline{\nu}$ for $t \le t_0$ and $\nu_*(t) = -\overline{\nu}$ for $t > t_0$,
where $t_0$ is the unique point of intersection of $\widetilde{P}_\beta(t)$ 
and $\Psi_\alpha (t)$ on $[0,T]$.
That is, the participant first buys (before $t_0$) and then sells (after $t_0$), both at full capacity.
\item[(d)]
If $P(0) < e^{(\beta - \mu)T} \Psi_\alpha(T)$ and \eqref{eq:diffdec} holds,
or $P(0) < \Psi_\alpha(0)$ and \eqref{eq:diffinc} holds,
then $\nu_*(t) = \nu$ for all $t$.
That is, the participant buys all the time at full capacity. 
\end{enumerate}
\end{proposition}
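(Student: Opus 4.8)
The plan is to reduce the whole proposition to the behavior of the single function $g_\alpha(t):=\Psi_\alpha(t)-\widetilde{P}_\beta(t)$ on $[0,T]$. Recall from \eqref{eq:gbm} and \eqref{eq:Pbeta} that $\widetilde{P}_\beta(t)=P(0)e^{-(\beta-\mu)t}$, and from \eqref{eq:psider} that $g_\alpha'(t)=(\beta-\mu)P(0)e^{-(\beta-\mu)t}-\ell e^{-\beta t}-\tfrac{N_\alpha'(t)}{N_\alpha(t)}\Psi_\alpha(t)$. The proposition then splits into two tasks: \textbf{(1)} show that \eqref{eq:diffinc} forces $g_\alpha'>0$ on $[0,T]$ and \eqref{eq:diffdec} forces $g_\alpha'<0$ on $[0,T]$ once $N$ is large; \textbf{(2)} deduce the four trading regimes from the sign pattern of a strictly monotone $g_\alpha$, using the bang--bang rule already extracted in the proof of Proposition \ref{coro:class}, namely that $\sup_{|\nu|\le\overline{\nu}}\nu\,(\partial_x v-\widetilde{P}_\beta(t))=\sup_{|\nu|\le\overline{\nu}}\nu\,g_\alpha(t)$ is attained at $\nu_*(t)=\overline{\nu}$ when $g_\alpha(t)\ge 0$ and at $\nu_*(t)=-\overline{\nu}$ when $g_\alpha(t)<0$ (the hypotheses of Proposition \ref{coro:class}, assumed here, guaranteeing $\mathcal{T}_*=T$ so that this pointwise rule is valid over all of $[0,T]$; note $\widetilde{P}_\beta$ is trivially Lipschitz).

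For task \textbf{(1)} I would substitute $N_\alpha'(t)/N_\alpha(t)=\alpha/(N^{1/\alpha}+t)$ and the explicit form $\Psi_\alpha(t)=(N^{1/\alpha}+t)^{-\alpha}\big(he^{-\beta T}(N^{1/\alpha}+T)^\alpha+\ell\int_t^T e^{-\beta s}(N^{1/\alpha}+s)^\alpha\,ds\big)$ from \eqref{eq:410}, and estimate uniformly in $t\in[0,T]$. For the increasing direction I would bound $\ell e^{-\beta t}\le\ell e^{-(\beta-\mu)t}$ (valid since $\mu>0$), divide $g_\alpha'$ by $e^{-(\beta-\mu)t}>0$, and control the leftover $\tfrac{N_\alpha'(t)}{N_\alpha(t)}\Psi_\alpha(t)e^{(\beta-\mu)t}$ using $N^{1/\alpha}\le N^{1/\alpha}+t$, $e^{(\beta-\mu)t}\le e^{(\beta-\mu)T}$ (so $e^{(\beta-\mu)T}e^{-\beta T}=e^{-\mu T}$ and $e^{(\beta-\mu)t}e^{-\beta t}\le 1$), and $\int_t^T e^{-\beta s}(N^{1/\alpha}+s)^\alpha\,ds\le\beta^{-1}(N^{1/\alpha}+T)^\alpha e^{-\beta t}$; this reproduces the bracketed quantity in \eqref{eq:diffinc} except that its $\ell$-term emerges as $\alpha\ell\beta^{-1}(N^{1/\alpha}+T)^\alpha N^{-1-1/\alpha}$ rather than $\alpha\ell\beta^{-1}N^{-1/\alpha}$, and since $(N^{1/\alpha}+T)^\alpha/N\to1$ as $N\to\infty$ the discrepancy is $o(N^{-1/\alpha})$ and is swallowed by the slack $\varepsilon N^{-1/\alpha}$, giving $g_\alpha'>0$. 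For the decreasing direction I would instead divide $g_\alpha'$ by $e^{-(\beta-\mu)t}$ at once, obtaining $(\beta-\mu)P(0)-\ell e^{-\mu t}-\tfrac{N_\alpha'(t)}{N_\alpha(t)}\Psi_\alpha(t)e^{(\beta-\mu)t}$, and note that $\ell e^{-\mu t}$ and $\tfrac{N_\alpha'(t)}{N_\alpha(t)}\Psi_\alpha(t)$ are both decreasing on $[0,T]$ — the first obviously, the second because $\alpha/(N^{1/\alpha}+t)$ and $\Psi_\alpha$ are positive and decreasing (the latter by \eqref{eq:psi}) — while $e^{(\beta-\mu)t}\ge1$; hence each of $\ell e^{-\mu t}$ and $\tfrac{N_\alpha'(t)}{N_\alpha(t)}\Psi_\alpha(t)e^{(\beta-\mu)t}$ attains its minimum on $[0,T]$ at $t=T$, where $\Psi_\alpha(T)=he^{-\beta T}$, and comparing the resulting lower bound with \eqref{eq:diffdec} yields $g_\alpha'<0$.

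For task \textbf{(2)}, a strictly monotone $g_\alpha$ has at most one zero in $[0,T]$, so its sign pattern is fixed by $g_\alpha(0)=\Psi_\alpha(0)-P(0)$ and $g_\alpha(T)=\Psi_\alpha(T)-P(0)e^{-(\beta-\mu)T}$, i.e.\ $g_\alpha(0)<0\Leftrightarrow P(0)>\Psi_\alpha(0)$ and $g_\alpha(T)<0\Leftrightarrow P(0)>e^{(\beta-\mu)T}\Psi_\alpha(T)$. If \eqref{eq:diffinc} holds ($g_\alpha$ increasing): $g_\alpha(T)\le0$ forces $g_\alpha\le0$ throughout, i.e.\ sell always (case (a)); $g_\alpha(0)\ge0$ forces $g_\alpha\ge0$ throughout, i.e.\ buy always (case (d)); and $g_\alpha(0)\le0<g_\alpha(T)$ gives a unique crossing $t_0$ with sell on $[0,t_0)$ and buy on $(t_0,T]$ (case (b)). If \eqref{eq:diffdec} holds ($g_\alpha$ decreasing), the same argument with the roles of $0$ and $T$ interchanged gives: $g_\alpha(0)\le0\Rightarrow$ sell always (case (a)); $g_\alpha(T)\ge0\Rightarrow$ buy always (case (d)); $g_\alpha(T)\le0<g_\alpha(0)\Rightarrow$ unique crossing $t_0$ with buy on $[0,t_0)$ and sell on $(t_0,T]$ (case (c)). Reading "buy" as $\nu_*=\overline{\nu}$ and "sell" as $\nu_*=-\overline{\nu}$ through the bang--bang rule recalled in task \textbf{(1)} then completes the proof.

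The routine part is the handful of elementary inequalities in task \textbf{(1)}; the one delicate point — and the main, though still mechanical, obstacle — is the constant bookkeeping needed to make the estimates match \eqref{eq:diffinc}--\eqref{eq:diffdec} exactly: for each summand one must pick the bound ($N^{1/\alpha}$ versus $N^{1/\alpha}+T$ inside powers, $t=0$ versus $t=T$ inside exponentials) that is simultaneously uniformly valid on $[0,T]$ and asymptotically sharp as $N\to\infty$, so that the residual is genuinely $o(N^{-1/\alpha})$ and is absorbed by the $\varepsilon N^{-1/\alpha}$ margin; choosing the right bound so that the increasing-case $\ell$-term collapses to the clean $\alpha\ell\beta^{-1}N^{-1/\alpha}$ is the crux.
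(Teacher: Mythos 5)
Your proposal is correct and follows essentially the same route as the paper: reduce everything to the uniform sign of $\left(\Psi_\alpha(t)-\widetilde{P}_\beta(t)\right)'$ on $[0,T]$ under \eqref{eq:diffinc} and \eqref{eq:diffdec}, then read off cases (a)--(d) from the endpoint signs $\Psi_\alpha(0)-P(0)$ and $\Psi_\alpha(T)-P(0)e^{-(\beta-\mu)T}$ together with strict monotonicity and the bang--bang rule from Proposition \ref{coro:class}. The only (harmless) difference is in execution: the paper evaluates $\int_t^T e^{-\beta s}N_\alpha(s)\,ds$ via incomplete Gamma functions and an asymptotic expansion with $o(N)$ error, whereas you use direct elementary bounds on the exact formula \eqref{eq:410} -- which makes your decreasing-direction estimate exact and leaves only the $o(N^{-1/\alpha})$ discrepancy in the increasing-direction $\ell$-term, correctly absorbed by the $\varepsilon N^{-1/\alpha}$ slack.
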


\begin{proof}
Note that
$\frac{N'_\alpha(t)}{N_\alpha(t)} = \alpha(N^{\frac{1}{\alpha}} + t)^{-1}$, and 
\begin{equation*}
%\lable{eq:gamma}
\int_t^T  e^{-\beta s} N_\alpha(s) ds = 
e^{\beta N^{\frac{1}{\alpha}}} \beta^{-\alpha - 1} \left(\Gamma(\alpha+1, \beta(N^{\frac{1}{\alpha}} + t)) - \Gamma(\alpha+1, \beta(N^{\frac{1}{\alpha}} + T))\right),
\end{equation*}
where $\Gamma(a,x): = \int_x^\infty t^{a-1} e^{-t} dt$ is the incomplete Gamma function.
As $N \to \infty$, we have
\begin{equation*}
\int_t^T  e^{-\beta s} N_\alpha(s) ds = \beta^{-1} \left(e^{-\beta t}N_\alpha(t) - e^{-\beta T} N_\alpha(T)\right) + o(N),
\end{equation*}
which together with \eqref{eq:410} and \eqref{eq:psider} implies that
\begin{equation}
\label{eq:psideral}
\begin{aligned}
\left(\Psi_\alpha(t) -\widetilde{P}_\beta(t)\right)'= -\frac{\alpha}{N^{\frac{1}{\alpha}} +t} &
\left[\frac{h e^{-\beta T} N_\alpha(T)}{N_\alpha(t)} + \ell \beta^{-1} \left(e^{-\beta t} - e^{-\beta T} \frac{N_\alpha(T)}{N_\alpha(t)}\right) + o(1)\right] \\
& - \ell e^{-\beta t} + (\beta - \mu) P(0) e^{-(\beta - \mu)t}.
\end{aligned}
\end{equation}
Multiplying the RHS of \eqref{eq:psideral} by $e^{(\beta - \mu)t}$, we get
\begin{equation*}
%\label{eq:414}
\begin{aligned}
-\frac{\alpha}{N^{\frac{1}{\alpha}} +t} &
\left[\frac{h e^{-\beta (t-T) - \mu t} N_\alpha(T)}{N_\alpha(t)} + \ell \beta^{-1} \left(e^{-\mu t} - e^{-\beta (t -T) - \mu t} \frac{N_\alpha(T)}{N_\alpha(t)}\right) + o(1)\right] \\
& - \ell e^{-\mu t} + (\beta - \mu) P(0).
\end{aligned}
\end{equation*}
Clearly, the sum of all the terms above is lower bounded by
\begin{equation*}
-\left(\frac{\alpha h e^{-\mu T} N_\alpha(T)}{N^{1+ \frac{1}{\alpha}}} + \alpha \ell \beta^{-1} N^{-\frac{1}{\alpha}} + \ell \right) + (\beta - \mu) P(0) 
\stackrel{\eqref{eq:diffinc}}{>} 0,
\end{equation*}
which implies that $\inf_{[0,T]} \left(\Psi_\alpha(t) -\widetilde{P}_\beta(t)\right)' > 0$,
and hence, $\Psi_\alpha(t) -\widetilde{P}_\beta(t)$ is increasing.

Moreover, the term is upper bounded by 
\begin{equation*}
-\left(\frac{\alpha h e^{-\beta T}}{N^{\frac{1}{\alpha}}+T} + \ell e^{-\mu T} \right) + (\beta - \mu) P(0) 
\stackrel{\eqref{eq:diffdec}}{<} 0,
\end{equation*}
which implies that $\sup_{[0,T]} \left(\Psi_\alpha(t) -\widetilde{P}_\beta(t)\right)' < 0$,
and hence, $\Psi_\alpha(t) -\widetilde{P}_\beta(t)$ is decreasing.

(a) If $P(0) > e^{(\beta - \mu)T} \Psi_\alpha(T)$ and \eqref{eq:diffinc} holds,
then $\Psi_\alpha(T) < \widetilde{P}_\beta(T)$ and $\Psi_\alpha(t) - \widetilde{P}_\beta(t)$ is increasing.
If $P(0) > \Psi_\alpha(0)$ and \eqref{eq:diffdec} holds,
then $\Psi_\alpha(0) < \widetilde{P}_\beta(0)$ and $\Psi_\alpha(t) - \widetilde{P}_\beta(t)$ is decreasing.
In both cases, we have $\Psi_\alpha(t) - \widetilde{P}_\beta(t) < 0$ for all $t$.

(b) (c) (d) follow the same argument as (a).
\end{proof}

\smallskip

\quad
See Figure \ref{fig:3} below for an illustration of the results in the above proposition.
Also note that the connection to the participant's risk sensitivity as remarked at the end of \S\ref{sc:linear1}
can also be made more explicit when
the price process $P(t)$ follows the GBM model in \eqref{eq:gbm}, for which 
we have  $\widetilde{P}_\beta(t) =  P(0) e^{-(\beta-\mu) t}$. Then, the three cases in Proposition \ref{coro:class} 
correspond to  $\beta =\mu$ (martingale), $\beta <\mu$ (sub-martingale), 
and $\beta >\mu$ (super-martingale). 
According to the three ranges of $\beta$, they can be viewed as representing 
the participant as risk-neutral, risk-seeking and risk-averse.

\begin{figure}[h]
\centering
\includegraphics[width=0.5\columnwidth]{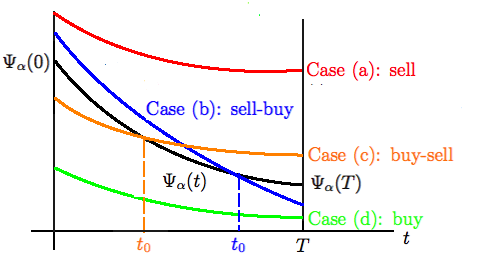}
\caption{Optimal stake trading with linear $\ell(\cdot), h(\cdot)$ when $\widetilde{P}_\beta(t)=P(0) e^{(\mu - \beta)t}$ and $N(t) = N_\alpha(t)$.}
\label{fig:3}
\end{figure}

\subsection{Convex utility}
\label{sc:linear2}

\quad 
It is possible to extend the above results to more general, non-linear utility functions $\ell (\cdot) $ and $h(\cdot)$, by 
following the same approach as above that leads to $v^+(t,x)$ and $v^- (t,x)$ in \eqref{eq:v+} and \eqref{eq:v-}. 
 
\quad Specifically, considering the two cases of $\partial_x v \ge \widetilde{P}_{\beta}(t)$, 
and $\partial_x v < \widetilde{P}_\beta(t)$, we can derive
\begin{eqnarray}
\label{eq:v+1}
&v^+(t,x) := e^{-\beta T} h\big( \gamma^+_{t,x}(T) \big) +  
\int_t^T \left[e^{-\beta s}\ell\big(   \gamma^+_{t,x}(s)\big) - \overline{\nu} \widetilde{P}_\beta(s)\right] ds, \\
%{\rm where} &\gamma^+_{t,x}(s): = \overline{\nu} N(s) \int_t^s \frac{du}{N(u)} + \frac{x N(s)}{N(t)},  \quad s \in [t, T]. \label{eq:g+}
%\end{eqnarray}
%\begin{eqnarray}
&v^{-}(t,x) :=  e^{-\beta T} h\big( \gamma_{t,x}^{-}(T)\big) +  \int_t^T \left[ e^{-\beta s} \ell \big(\gamma^-_{t,x}(s)\big) + \overline{\nu} \widetilde{P}_\beta(s)\right] ds;  \label{eq:v-1} 
%{\rm where}&\gamma^-_{t,x}(s): = -\overline{\nu} N(s) \int_t^s \frac{du}{N(u)} + \frac{x N(s)}{N(t)}, \quad s\in [t , T]. \label{eq:g-}
\end{eqnarray}
whereas $\gamma_{t,x}^{+}$ and $\gamma_{t,x}^{-}$ remain the same as in \eqref{eq:g+} and \eqref{eq:g-}.
% reproduced below
%for easy reference: for all $(t,x)$ and all $s \ge t$,
%\begin{eqnarray}
%\label{eq:g+g-}
%\gamma^+_{t,x}(s): = \overline{\nu} N(s) \int_t^s \frac{du}{N(u)} + \frac{x N(s)}{N(t)}, \quad
%\gamma^-_{t,x}(s): = -\overline{\nu} N(s) \int_t^s \frac{du}{N(u)} + \frac{x N(s)}{N(t)}. 
%\end{eqnarray}
%Note that $\gamma_{t,x}^{+}$ and $\gamma_{t,x}^{-}$ are increasing (and linear) in $x$, 
%$\gamma_{t,x}^{+}$ is decreasing in $t$,
%and $\gamma_{t,x}^{+} (s) \ge \gamma_{t,x}^{-} (s)$ for all $(t,x)$ and all $s \ge t$.
%Also note that while $\gamma_{t,x}^{+}$ is decreasing in $t$, for $\gamma_{t,x}^{-}$ to be decreasing in $t$, 
%a condition is required:
%\begin{eqnarray}
%\label{eq:gcond}
%\overline\nu \le x \frac{N'(t)}{N(t)}, \quad \forall t\in [0,T].
%\end{eqnarray}
%When $N(t)=N_\alpha (t)$ as in \eqref{eq:Nal}, we have 
%$$\frac{N'(t)}{N(t)}  = \frac{\alpha}{N^{\frac{1}{\alpha}} + t} ; $$
%hence, the condition in \eqref{eq:gcond} is guaranteed if 
%\begin{equation}
%\label{eq:gcond1}
%\overline\nu \le \frac{\alpha x}{N^{\frac{1}{\alpha}} + T} , \quad \alpha >0. 
%\end{equation}

\quad The $\Psi $ function in \eqref{eq:410} now splits into two functions: 
for $(t,x)\in Q: = \{(t,x): 0 \le t < T, \, 0<x<N(t)\}$, we have
\begin{eqnarray}
\partial_x v^+(t,x) = \underbrace{\frac{1}{N(t)} \left(e^{-\beta T}N(T) h'\big( \gamma_{t,x}^{+}(T)\big)  
+  \int_t^T  e^{-\beta s} N(s) \ell' \big(\gamma^+_{t,x}(s)\big)ds \right)}_{:= \Psi^+ (t,x)},  \label{eq:psi+}
\end{eqnarray}
and 
\begin{eqnarray}
\partial_x v^-(t,x) = \underbrace{\frac{1}{N(t)} \left(e^{-\beta T}N(T) h'\big( \gamma_{t,x}^{-}(T)\big)  
+  \int_t^T  e^{-\beta s} N(s) \ell' \big(\gamma^-_{t,x}(s)\big)ds \right)}_{:= \Psi^- (t,x)}. \label{eq:psi-}
%\partial_x v^{-}(t,x) = \underbrace{\frac{1}{N(t)} \left(h e^{-\beta T} N(T) + \ell \int_t^T  e^{-\beta s} N(s)\ell \big(\gamma^-_{t,x}(s)\big) ds \right := \Psi(t) ;
\end{eqnarray}

%\medskip

%============{\bf Newly added/modified}==============

\quad
Note that both $\Psi^+$ and $\Psi^-$ depend on $x$ (as well as on $t$), via $\gamma^+_{t,x}$ and 
$\gamma^-_{t,x}$. This dependence makes it necessary to take a closer look at  $\gamma^+_{t,x}$ and $\gamma^-_{t,x}$, 
since the $x=x(t)$ involved in both depends on the control $\nu$ {\it before} (and up to) $t$.
We have the following cases: for $s\ge t$,
\begin{eqnarray}
{\rm if} \; x=\gamma^+_{0,x}(t), \; {\rm then} &  %\quad
\gamma^+_+ (s):=\gamma^+_{t, x}(s) = \left( \overline{\nu} \int_t^s \frac{du}{N(u)} + \overline{\nu} \int_0^t \frac{du}{N(u)} + \frac{x}{N} \right) N(s),
\label{eq:g++}\\
{\rm if} \;x=\gamma^-_{0,x}(t), \; {\rm then} & %\quad
\gamma^-_- (s):=\gamma^-_{t, x}(s) = \left( -\overline{\nu} \int_t^s \frac{du}{N(u)} - \overline{\nu} \int_0^t \frac{du}{N(u)} + \frac{x}{N} \right) N(s) .
\label{eq:g--}
\end{eqnarray}
In other words, $\gamma^+_+$ corresponds to $\nu =\overline{\nu}$ both before and after $t$, whereas 
$\gamma^-_-$ corresponds to $\nu =-\overline{\nu}$ both before and after $t$. 
The other two cases are similar: 
\begin{eqnarray}
{\rm if} \;x =\gamma^+_{0,x}(t), \; {\rm then} & %\quad
\gamma^-_+ (s):=\gamma^-_{t, x}(s) = \left(- \overline{\nu} \int_t^s \frac{du}{N(u)} + \overline{\nu} \int_0^t \frac{du}{N(u)} + \frac{x}{N} \right) N(s),
\label{eq:g+-}\\
{\rm if} \; x=\gamma^-_{0,x}(t), \; {\rm then} & %\quad
\gamma^+_- (s):=\gamma^+_{t, x}(s) = \left( \overline{\nu} \int_t^s \frac{du}{N(u)} -\overline{\nu} \int_0^t \frac{du}{N(u)} + \frac{x}{N} \right) N(s) ;
\label{eq:g-+}
\end{eqnarray}
where $\gamma^-_+$ corresponds to $\nu =\overline{\nu}$ before (and up to) $t$ and $\nu =-\overline{\nu}$ after $t$,
and $\gamma^+_-$ corresponds to the other way around. 

\quad
Substituting these four cases into $\Psi^+$ and $\Psi^-$ in \eqref{eq:psi+} and \eqref{eq:psi-}
further splits the latter two into four cases: 
\begin{eqnarray}
& \Psi^+_+ (t):= \Psi^+ (t, \gamma^+_{0,x}(t)), \quad
\Psi^-_-(t):= \Psi^- (t, \gamma^-_{0,x}(t));
\label{eq:Psi+-1}\\
& \Psi^-_+(t):= \Psi^- (t, \gamma^+_{0,x}(t)), \quad
\Psi^+_-(t):= \Psi^+ (t, \gamma^-_{0,x}(t)).
\label{eq:Psi+-2}
\end{eqnarray}
All four are now 
functions of $t$ only, as $x$ has been replaced by either $\gamma^+_{0,x}(t)$ or $\gamma^-_{0,x}(t)$.

\quad
Clearly, from \eqref{eq:g++}-\eqref{eq:g-+} above, we have  
\begin{eqnarray}
\label{eq:pg+-}
\partial_t \gamma^+_+  (s)= \partial_t \gamma^-_-  (s)=0, 
\quad 
\partial_t \gamma^-_+  (s)= \frac{2 \overline{\nu} N(s)}{N(t)} >0, 
\quad 
\partial_t \gamma^+_-  (s)= -\frac{2 \overline{\nu} N(s)}{N(t)} <0.
\end{eqnarray}

\quad
Now, suppose $\ell (\cdot) $ and $h(\cdot)$ are both smooth, {\it convex} (and increasing) functions.
Hence, $\ell' (\cdot) \ge 0 $ and $h'(\cdot) \ge 0$, and both are increasing functions. 
% in addition to being increasing functions as assumed before. 
Then,  it is readily verified:
\begin{itemize}
\item[(i)] Both $\Psi^+_+ (t)$ and $\Psi^-_- (t) $ are decreasing in $t\in [0,T]$, and so is 
$\Psi^+_- (t) $; whereas  $\Psi^-_+ (t) $ could be both increasing and decreasing (i.e., non-monotone). 
\item[(ii)] Furthermore, $\Psi^+_+ (t) \ge \Psi^-_- (t) $ for all $t\in [0,T]$.
\end{itemize}

For instance, for $\Psi^+_+ (t)$ in (i), consider
\begin{eqnarray}
\label{ptpsi}
\partial_t \Psi^+_+ (t) &=& e^{-\beta T}N(T)\left(\frac{h^{''}(\gamma^+_+ (T))\partial_t \gamma^+_+ (T)}{N(t)}
-\frac{h'(\gamma^+_+ (T))N'(t)}{N^2(t)}\right) \nonumber\\
&& -\frac{N'(t)}{N^2(t)} \int_t^T  e^{-\beta s} N(s) \ell' (\gamma^+_+ (s))ds 
-e^{-\beta t} \ell' (\gamma^+_+ (t))\nonumber\\
% -\frac{N'(t)}{N^2(t)} \int_t^T  e^{-\beta s} N(s) \ell' \big(\gamma^+_{t,x}(s)\big)ds 
&&+\frac{1}{N(t)}   \int_t^T  e^{-\beta s} N(s) \ell^{''}(\gamma^+_+ (s)) \partial_t \gamma^+_+ (s) ds 
\quad \le 0, 
\end{eqnarray}
where $\le 0$ follows from $\partial_t \gamma^+_+ (\cdot)= 0$ in both the first and last terms on the RHS.
The other two cases, $\partial_t \Psi^-_- (t) \le 0$ and $\partial_t \Psi^+_- (t) \le 0$, are similarly verified. 

%Combining this with the properties of $\gamma_{t,x}^{+}$ and $\gamma_{t,x}^{-}$ mentioned above, 
%we can readily verify that for all $(t,x)\in  Q: = \{(t,x): 0 \le t < T, \, 0<x<N(t)\}$,
%(a) $\Psi^+$ and  $\Psi^-$ are both decreasing in $t$, and both increasing in $x$, and (b)
% $\Psi^+$ is , and (c)
%\begin{eqnarray}
%\label{eq:psi+-}
 %$\Psi^+ (t,x)  \ge  \Psi^- (t,x)$. % \quad \forall (t,x)\in  Q: = \{(t,x): 0 \le t < T, \, 0<x<N(t)\} .
 %\end{eqnarray}
 
 \quad  As in the case of linear utility, the properties above can be used to compare against $\widetilde{P}_\beta(t)$ to 
identify the optimal trading strategy. 
 Consider the case of $\widetilde{P}_\beta(t)$ being a constant, 
 $\widetilde{P}_\beta(t) = P(0)$ for all $t\in [0,T]$, as in part (i) of Proposition \ref{coro:class}. 
 If $\Psi^+_+ (t) \ge \Psi^-_- (t) \ge P(0) $ for all $t\in [0,t]$, then the optimal strategy is to buy all the time 
 and at rate $\overline \nu$.
 If $P(0) \ge \Psi^+_+ (t) >\Psi^-_- (t)$ for all $t\in [0,t]$, then it is optimal to sell all the time, at full capacity.   
 
 \quad
 On the other hand, since $\Psi^+_- $ corresponds to sell first (before $t$) and then buy, this clearly cannot
 be optimal, as it is impossible for $\Psi^+_- \le P(0)$ before $t$ and $\Psi^+_- \ge P(0)$ after $t$, 
 since $\Psi^+_-$ is decreasing in $t$.
 Similarly,  $\Psi^-_+ $ corresponds to buy first (before $t$) and then sell, which can be optimal provided
 if $\Psi^-_+  (t) $ is decreasing in $t$.
 %{\bf[Connect to the proposition below. But need to first make sure the conclusions are consistent.]}
%=================================
 
 %\medskip
 
 \quad 
 %The optimal strategy for the general case is more involved, 
 %and depends on the crossings of $\Psi^+(t,\gamma^+_{0,x}(t))$, $\Psi^-(t,\gamma^-_{0,x}(t))$ and $\widetilde{P}_\beta(t)$ 
%and some guesswork on the solution to the HJB equation.
 %The situation becomes easier if we assume that the participant is risk neutral, i.e. $\widetilde{P}_\beta(t) = P(0)$ (and other technical conditions).
 The details are stated in the following proposition; and
 see Figure \ref{fig:4} for an illustration.
 
 \begin{proposition}
 \label{prop:convex}
 Assume that $\ell(\cdot)$ and $h(\cdot)$ are twice continuously differentiable, convex, and satisfy the conditions in Assumption \ref{assump:1}.
 Assume that $\widetilde{P}_\beta(t)$ stays constant, i.e. $\widetilde{P}_\beta(t) = P(0)$ for all $t \in [0,T]$.
 Further assume the condition \eqref{eq:toend},
 and that $t \to \Psi^-_{+}(t)$ is decreasing
 then
 \begin{enumerate}
 \item[(a)]
 If $P(0) \ge \Psi^+_+(T) \vee \Psi^-(0,x)$,
 then $\nu_*(t) = -\overline{\nu}$ for all $0 \le t \le T$.
 That is, the participant sells at all time at full capacity.
 \item[(b)]
 If $P(0) \le \Psi^-_+(T)$,
 then $\nu_*(t) = \overline{\nu}$ for all $0 \le t \le T$.
 That is, the participant buys at all time at full capacity.
 \item[(c)]
 If $\Psi^+_+(T) < \Psi^-(0,x)$ and $\Psi^-_+(T) < P(0) < \Psi^-(0,x)$,
then 
 \begin{equation*}
\nu_*(t)  = \left\{ \begin{array}{lcl}
\overline{\nu} & \mbox{for } t \le t_0, \\
-\overline{\nu} & \mbox{for } t > t_0,
\end{array}\right.
\end{equation*}
 where $t_0$ is the unique point in $[0,T]$ such that $\Psi^-_{+}(t) = P(0)$.
 That is, the participant first buys and after some time sells, both at full capacity.
 \item[(d)]
 If $\Psi^-(0,x) < \Psi^+_+(T)$, then
 \begin{enumerate}
 \item[(1)]
 if $\Psi^-(0,x)< P(0) < \Psi^+_+(T)$, then $\nu_*(t) = -\overline{\nu}$ for all $0 \le t \le T$.
 That is, the participant sells at all time at full capacity.
 \item[(2)]
 if $\Psi^-_{+}(T) < P(0) \le \Psi^-(0,x)$, then 
 then 
 \begin{equation*}
\nu_*(t)  = \left\{ \begin{array}{lcl}
\overline{\nu} & \mbox{for } t \le t_0, \\
-\overline{\nu} & \mbox{for } t > t_0,
\end{array}\right.
\end{equation*}
 where $t_0$ is the unique point in $[0,T]$ such that $\Psi^-_{+}(t) = P(0)$.
 That is, the participant first buys and after some time sells, both at full capacity.
 \end{enumerate}

% \item[(e)]
%  If $\Psi^-(0,x)< P(0) < \Psi^+(T, \gamma^+_{0,x}(T))$ and $t \to \Psi^-(t, \gamma^+_{0,x}(t))$ is increasing, 
%  then
% \begin{enumerate}
% \item[(1)]
 %if $ \Psi^-(t, \gamma^+_{0,x}(T)) < P(0) <\Psi^+(T, \gamma^+_{0,x}(T))$,
 %then then $\nu_*(t) = -\overline{\nu}$ for all $0 \le t \le T$.
 %That is, the participant sells at all time at full capacity.
% \item[(2)]
% if $\Psi^-(0, x)< P(0) < \Psi^-(t, \gamma^+_{0,x}(T))$,
%then 
% \begin{equation*}
%\nu_*(t)  = \left\{ \begin{array}{lcl}
%\overline{\nu} & \mbox{for } t \le t_0, \\
%-\overline{\nu} & \mbox{for } t > t_0,
%\end{array}\right.
%\end{equation*}
% where $t_0$ is the unique point in $[0,T]$ such that $\Psi^-(t, \gamma^+_{0,x}(t)) = P(0)$.
 %That is, the participant first buys and after some time sells, both at full capacity.
 %\end{enumerate}
 \end{enumerate}
 \end{proposition}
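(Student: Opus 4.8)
The plan is to reduce \eqref{eq:obj13} (with $\widetilde P_\beta\equiv P(0)$) to a one--dimensional deterministic control problem on $[0,T]$, read off the bang--bang structure of $\nu_*$ from the HJB equation of Theorem~\ref{thm:41}, and then use the monotonicity of the four functions $\Psi^+_+,\Psi^-_-,\Psi^+_-,\Psi^-_+$ to show that an optimal trajectory switches at most once, and only from buying to selling; the cases (a)--(d) then become a matter of comparing $\Psi^-_+(\cdot)$ with $P(0)$, exactly as $\Psi(\cdot)$ was compared with $\widetilde P_\beta(\cdot)$ in Proposition~\ref{coro:class}. For the reduction: condition \eqref{eq:toend} forces every admissible trajectory from $(0,x)$ to lie between $\gamma^-_{0,x}(\cdot)$ and $\gamma^+_{0,x}(\cdot)$, hence strictly inside $(0,N(\cdot))$, so $\mathcal T_*=T$, the boundary data in \eqref{eq:HJB4} never bind, and $V_2(0,x)=\sup_\nu\{\int_0^T[-P(0)\nu(s)+e^{-\beta s}\ell(X(s))]\,ds+e^{-\beta T}h(X(T))\}$ subject to (C0),(C3).

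By Theorem~\ref{thm:41}, $V_2=v$ is the unique viscosity solution of \eqref{eq:HJB4}, and since $\sup_{|\nu|\le\overline\nu}\{\nu(\partial_x v-P(0))\}=\overline\nu\,|\partial_x v-P(0)|$, the optimal feedback is $\nu_*=\overline\nu$ on $\{\partial_x v>P(0)\}$ and $\nu_*=-\overline\nu$ on $\{\partial_x v<P(0)\}$. Along the optimal path, $p_*(t):=\partial_x v(t,X_*(t))$ is $C^1$ and, solving the adjoint ODE $p_*'=-\tfrac{N'}{N}p_*-e^{-\beta t}\ell'(X_*)$ with $p_*(T)=e^{-\beta T}h'(X_*(T))$, equals $\tfrac1{N(t)}\bigl(e^{-\beta T}N(T)h'(X_*(T))+\int_t^T e^{-\beta s}N(s)\ell'(X_*(s))\,ds\bigr)$. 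Comparing this with \eqref{eq:psi+}--\eqref{eq:psi-} and \eqref{eq:g++}--\eqref{eq:g-+}: along a buy-then-sell trajectory switching at $\tau$ one has $p_*(t)=\Psi^-_+(t)$ for $t\ge\tau$, and along a sell-then-buy trajectory switching at $\tau$ one has $p_*(t)=\Psi^+_-(t)$ for $t\ge\tau$. Since $p_*$ is continuous and $\Psi^+_-$ is decreasing (fact (i) before the proposition), a switch from selling to buying at $\tau$ would force $p_*(\tau)=P(0)$ with $p_*>P(0)$ just after, which is impossible; hence no such switch can occur. Using the hypothesis that $\Psi^-_+$ is decreasing, once $p_*$ drops below $P(0)$ it cannot return above it, so there is at most one switch, and it is from buying to selling. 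Consequently $X_*$ equals $\gamma^+_{0,x}$ on $[0,t_0]$ and $\gamma^-_+$ on $[t_0,T]$ for a single $t_0\in[0,T]$, with $t_0=0$ the ``sell all the time'' case (value $v^-(0,x)$) and $t_0=T$ the ``buy all the time'' case (value $v^+(0,x)$).

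It remains to locate $t_0$. Writing $g(t)$ for the value of the buy-then-sell strategy that switches at time $t$, a direct differentiation using $\partial_t\gamma^-_+(s)=2\overline\nu N(s)/N(t)$ from \eqref{eq:pg+-} (all remaining boundary contributions cancelling because $\gamma^-_+(t)=\gamma^+_{0,x}(t)$) gives $g'(t)=2\overline\nu\,(\Psi^-_+(t)-P(0))$, so $U(x)=V_2(0,x)=\max_{[0,T]}g=g(t_0)$, and the switching identity is $\Psi^-_+(t_0)=P(0)$ (which is also the smooth-fit relation $\partial_x v^+=\partial_x v^-$ at $t_0$). When $\Psi^-_+$ is decreasing (equivalently $\Psi^-(0,x)\ge\Psi^+_+(T)$, since $\Psi^-_+(0)=\Psi^-(0,x)$ and $\Psi^-_+(T)=\Psi^+_+(T)=e^{-\beta T}h'(\gamma^+_{0,x}(T))$): $g'\le0$ throughout iff $P(0)\ge\Psi^-(0,x)$, which is case (a); $g'\ge0$ throughout iff $P(0)\le\Psi^+_+(T)$, which is case (b); otherwise $g$ has a unique interior maximiser $t_0$ solving $\Psi^-_+(t_0)=P(0)$, which is case (c). When $\Psi^-_+$ is not monotone ($\Psi^-(0,x)<\Psi^+_+(T)$), $g$ need not be unimodal, and the inequalities stipulated in (d)(1) and (d)(2) are precisely what is needed to guarantee that $g$ is still maximised at $t_0=0$, respectively at the stated interior point, which one checks by comparing $g$ with its endpoint values $g(0)=v^-(0,x)$, $g(T)=v^+(0,x)$ and using the ordering $\Psi^+_+\ge\Psi^-_-$ (fact (ii)). In every case one writes $g(t_0)$ out in closed form through $v^+$, $v^-$ and $\gamma^+_{0,x}(t_0)$ just as in Proposition~\ref{coro:class}, and $\nu_*$ is as stated; $b_*\equiv0$ was already fixed before \eqref{eq:45}.

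The technical heart, and the main obstacle, is the reduction to single-switch (buy-then-sell) strategies together with the control of the sign of $\partial_x v-P(0)$ \emph{off} the optimal trajectory. In the regime of (d), where $\Psi^-_+$ is not monotone, the argument ``once below $P(0)$, stay below'' is no longer available, and one must instead exhibit a genuine $C^1$ classical solution $\widetilde v$ of \eqref{eq:HJB4} obtained by patching $v^+$ on the buy region and $v^-$ on the sell region, verifying the sign constraints $\partial_x\widetilde v\ge P(0)$ (resp.\ $\le P(0)$) there from the monotonicity of $\Psi^+_+,\Psi^-_-,\Psi^+_-$ and the ordering (ii); invoking the uniqueness in Theorem~\ref{thm:41} — or, equivalently, a direct verification (supersolution) inequality applied along the admissible tube between $\gamma^-_{0,x}$ and $\gamma^+_{0,x}$ — then identifies $\widetilde v$ with $V_2$ and closes the argument.
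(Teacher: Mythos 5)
The paper states this proposition without a formal proof --- the only supporting material is the heuristic discussion preceding it (the monotonicity facts (i)--(ii) for $\Psi^+_+,\Psi^-_-,\Psi^+_-$ and the verbal remark that sell-then-buy is incompatible with $\Psi^+_-$ decreasing) --- so your argument is necessarily a different and more substantial route. For parts (a)--(c) it is essentially complete and correct: the reduction via \eqref{eq:toend} to $\mathcal{T}_*=T$, the bang-bang feedback from the HJB nonlinearity, and above all the identity $g'(t)=2\overline{\nu}\,\bigl(\Psi^-_+(t)-P(0)\bigr)$ for the payoff $g(t)$ of the buy-then-sell strategy switching at $t$ (the boundary terms cancel because $\gamma^-_+(t)=\gamma^+_{0,x}(t)$); this is exactly what upgrades the paper's heuristic to a proof once the optimum is known to lie in the buy-then-sell family. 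One repair is needed: you set $p_*(t)=\partial_x v(t,X_*(t))$ and call it $\mathcal{C}^1$, but $v$ is only known to be a viscosity solution, so $p_*$ should instead be defined as the Pontryagin costate solving the adjoint ODE you wrote. Doing so also simplifies the single-switch step: from $p_*'=-\frac{N'}{N}p_*-e^{-\beta t}\ell'(X_*)\le 0$ (using $\ell'\ge 0$, $N'\ge 0$, $p_*\ge 0$) the costate is nonincreasing along any extremal, hence crosses $P(0)$ at most once and only downward; you do not need the monotonicity of $\Psi^+_-$ for this at all, and the buy-then-sell structure follows for \emph{any} increasing $\ell,h$.

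The genuine gap is part (d), which you correctly flag but only sketch. Two concrete problems there. First, $\Psi^-_+(T)=\Psi^+_+(T)=e^{-\beta T}h'\bigl(\gamma^+_{0,x}(T)\bigr)$, so the blanket hypothesis that $\Psi^-_+$ is decreasing forces $\Psi^-(0,x)=\Psi^-_+(0)\ge\Psi^-_+(T)=\Psi^+_+(T)$, making the premise of (d) vacuous as literally stated; reading (d) as the non-monotone regime, as you do, is the only sensible interpretation, but then the stated hypothesis must be dropped there. Second, even granting the buy-then-sell reduction in that regime, (d)(1) gives $g'(0)<0$ and $g'(T)>0$, so the maximum of $g$ sits at an endpoint and choosing between $t_0=0$ and $t_0=T$ requires the sign of $\int_0^T\bigl(\Psi^-_+(t)-P(0)\bigr)\,dt$, which the stated inequalities do not control; the ordering $\Psi^+_+\ge\Psi^-_-$ does not close this, and the patching/verification argument you gesture at is not carried out. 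So (d) remains unproved in your write-up (as it does in the paper, which offers no proof at all).
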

 
 \begin{figure}[htb]
    \centering
\begin{subfigure}{0.5\textwidth}
  \includegraphics[width=\linewidth]{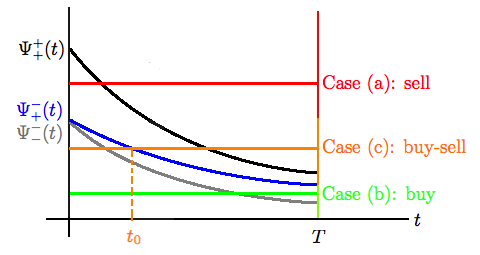}
\end{subfigure}\hfil 
\begin{subfigure}{0.5\textwidth}
  \includegraphics[width=\linewidth]{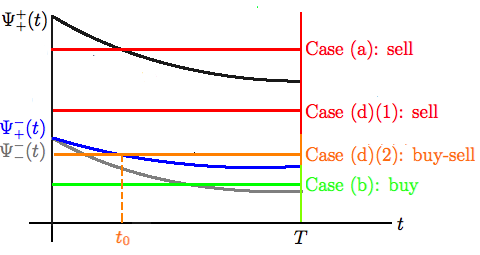}
\end{subfigure}\hfil
\caption{Optimal stake trading with convex $\ell(\cdot), h(\cdot)$ when $\widetilde{P}_\beta(t)$ is constant,
and $\Psi^+_+(T) <\Psi^-(0,x)$ (left) and $\Psi^+_{+}(T)> \Psi^-(0,x)$ (right).}
\label{fig:4}
\end{figure}

\section{Extension: Risk Control}
\label{sc5}

%\quad So far we have considered the exchange/trading problem in the PoS protocol
%from each participant's standpoint in order to maximize utility,
%which can either be stake pursuing (Section \ref{sc3}), or boosting consumption (Section \ref{sc4}).
%In this section, we discuss two additional perspectives on the problem,
%from a game viewpoint and from risk control considerations. 
%The analysis relies on the results and approaches in the previous sections.

%------------------------------------------------------------------------------------------------
%\subsection{Risk control }
%\label{sc52}

\quad In the previous sections, we have focused on profit seeking objectives
in which a participant's utility increases with getting more stakes, or consuming more.
In the modern finance literature, 
Markowitz \cite{Mar59} pioneered the idea of balancing return and risk in any investment, which is particularly relevant 
for cryptocurrency trading, which often involves substantial volatility.
%In addition, there are regulations to penalize monopoly, to ensure 
%One basic idea is 
%that each participant's holding of stakes does not deviate too far away from others,
%so as to bear an excessive level of risk caused by price fluctuation/drawdown.
In this spirit, here we add to the utility objective two ``cost'' terms that penalize the deviation of participant $k$'s
holding of stakes from the average of all others. 
The idea is, to extent this deviation measures risk (analogous to the variance in the Markowitz model), 
it should be the price to be paid for  
the utility (in holding stakes) that $k$ wants to maximize.  
(The same idea has been used in \cite{GTX22} in the context of stochastic games.)
Specifically, the deviation of participant $k$'s holding from the average all others can be expressed as
$| X_k(t) - \frac{N(t)}{K}|$, taking into account $N(t)=\sum_{k =1}^K X_k(t)$. Hence, the new objective function is:
%this amounts to requiring $| X_k(t) - \frac{N(t)}{K}|$ as small as possible.
%In the consumption-investment setting, 
%we add a penalty term to control the level of stake holding of each participant:
\begin{align}
\label{eq:obj55}
U(x):=\sup_{\{\nu(t), b(t)\}}  & J(\nu, b):= \mathbb{E}\bigg\{ \int_0^{\mathcal{T}}e^{-\beta t} (dc(t) + \ \ell(X(t)) dt) + e^{-\beta \mathcal{T}} \left( b(\mathcal{T}) + h(X(\mathcal{T}) \right)  \notag \\
& \qquad  -  \int_0^{\mathcal{T}} e^{-\delta t}g\left(X(t) - \frac{N(t)}{K}\right) dt - e^{-\delta \mathcal{T} }q\left(X(\mathcal{T}) - \frac{N(\mathcal{T})}{K}\right) \bigg\} \\
& \mbox{ subject to } X'(t) = \nu(t) + \frac{N'(t)}{N(t)} X(t), \, X(0) = x, \tag{C0} \\
& \qquad \qquad \quad \, dc(t) + db(t) - rb(t)dt + P(t) \nu(t) dt = 0, \tag{C1} \\
& \qquad \qquad \quad \,  b(0) = 0, \, b(t) \ge 0 \mbox{ and } 0 \le X(t) \le N(t), \tag{C2} \\ 
& \qquad \qquad \quad  \,  |\nu(t)| \le \overline{\nu}, \tag{C3}
\end{align}
where $\delta > 0$ is a discount factor (which may or may not be equal to $\beta$),
and $g: \mathbb{R} \to \mathbb{R}_{+}$ 
and $q: \mathbb{R} \to \mathbb{R}_{+}$ are symmetric, and increasing on $\mathbb{R}_{+}$
(a typical example is $g(x) = g x^2$ and $q(x) = qx^2 $ with $g, q > 0$).

\quad The theorem below follows the same argument as Theorem \ref{thm:41}.

\begin{theorem}
\label{thm:51}
Let the assumptions in Theorem \ref{thm:41} hold for the problem \eqref{eq:obj55}.
Assume that 
$g, q \in \mathcal{C}^1(\mathbb{R})$
are symmetric, and increasing on $\mathbb{R}_{+}$.
Then $U(x) = v(0,x)$ where $v(t,x)$ is the unique viscosity solution to the following HJB equation:
\begin{equation}
\label{eq:HJB5}
\left\{ \begin{array}{lcl}
\partial_t v + e^{-\beta t} \ell(x) - e^{-\delta t} g\left(x - \frac{N(t)}{K} \right)+ \frac{x N'(t)}{N(t)} \partial_x v + \sup_{|\nu| \le \overline{\nu}} \{\nu ( \partial_x v - \widetilde{P}_{\beta}(t))\} = 0 \quad \mbox{in } Q, \\
v(T,x) = e^{-\beta T}h(x) - e^{-\delta T} q\left( x - \frac{N(T)}{K}\right), \\
v(t,0) = e^{-\beta t} h(0) - e^{-\delta t} q\left(\frac{N(t)}{K}\right), \,\, v(t, N(t)) = e^{-\beta t} h(N(t)) - e^{-\delta t} q\left(\frac{(K-1)N(t)}{K}\right).
\end{array}\right.
\end{equation}
Moreover, the optimal strategy is $b_{*}(t) = 0$ and $\nu_{*}(t) = \nu_{*}(t, X_{*}(t))$ for $0 \le t \le \mathcal{T}_{*}$ (if it exists),
where $\nu_{*}(t,x)$ achieves the supremum in \eqref{eq:HJB4}, and $X_{*}(t)$ solves 
$X_*'(t) = \nu_*(t, X_*(t)) + \frac{N'(t)}{N(t)}X_*(t)$ with $X_*(0) = x$, and $\mathcal{T}_{*}: = \inf\{t>0: X_*(t) = 0 \mbox{ or } N(t)\} \wedge T$.
\end{theorem}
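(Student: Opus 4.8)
The plan is to repeat, essentially verbatim, the argument proving Theorem \ref{thm:41}, with the two penalty terms absorbed into the running and terminal payoff of the control problem over $\nu$ alone.

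First I would note that adding $-\int_0^{\mathcal T}e^{-\delta t}g(X(t)-\tfrac{N(t)}{K})\,dt-e^{-\delta\mathcal T}q(X(\mathcal T)-\tfrac{N(\mathcal T)}{K})$ to the objective does not affect the separation of $J(\nu,b)$ carried out in \eqref{eq:42}--\eqref{eq:U}: this quantity is a functional of the state path $X$ alone, hence of $\nu$ alone via (C0), so it enters only the $\nu$-part and leaves $J_1(b)=(r-\beta)\int_0^{\mathcal T}e^{-\beta t}b(t)\,dt$ unchanged. Since $\beta\ge r$ and $b(t)\ge 0$ by (C2), one still has $\sup_b J_1(b)=0$, attained at $b_*(t)\equiv 0$, so \eqref{eq:obj55} reduces to $U(x)=V_2(0,x)$, where $V_2(t,x)$ is the value function in \eqref{eq:obj13} augmented by the running term $-e^{-\delta s}g(X(s)-\tfrac{N(s)}{K})$ in the integrand and the terminal term $-e^{-\delta\mathcal T}q(X(\mathcal T)-\tfrac{N(\mathcal T)}{K})$.

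Next, the standard dynamic-programming argument used for Lemma \ref{lem:HJBvis} (cf.\ \cite[Chapter II, Section 7]{FS06}) shows $V_2$ is a viscosity solution of \eqref{eq:HJB5}; the terminal and boundary data match because when $X$ is absorbed at $0$ or $N(t)$ the continuation value is purely terminal, and by the symmetry of $q$ one has $q(-\tfrac{N(t)}{K})=q(\tfrac{N(t)}{K})$ while $q(N(t)-\tfrac{N(t)}{K})=q(\tfrac{(K-1)N(t)}{K})$. For uniqueness I would write \eqref{eq:HJB5} as $\partial_t v+H(t,x,\partial_x v)=0$ with
\[
H(t,x,p)=e^{-\beta t}\ell(x)-e^{-\delta t}g\big(x-\tfrac{N(t)}{K}\big)+\frac{x\,N'(t)}{N(t)}\,p+\overline\nu\,\big|p-\widetilde P_\beta(t)\big|,
\]
and verify the structure condition \eqref{eq:uniqueLip} on the bounded set $\{0\le t\le T,\ 0\le x\le N(t)\}$. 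The only new term relative to Theorem \ref{thm:41} is $e^{-\delta t}g(x-\tfrac{N(t)}{K})$: although $g$ is only assumed $\mathcal C^1$ on all of $\mathbb{R}$, on this bounded set the argument $x-\tfrac{N(t)}{K}$ stays in a compact interval, so $g$ is Lipschitz there, and $t\mapsto\tfrac{N(t)}{K}$ is $\mathcal C^1$ (hence Lipschitz) since $N\in\mathcal C^2([0,T])$; the remaining three terms are bounded exactly as before using $N(t)\ge N>0$, $|\nu|\le\overline\nu$, and the Lipschitz condition \eqref{eq:LipP}. By \cite[Chapter II, Corollary 9.1]{FS06} the HJB equation \eqref{eq:HJB5} then has a unique viscosity solution, which must be $V_2$, so $U(x)=v(0,x)$. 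The form of the optimal strategy follows as in Theorem \ref{thm:41}: $b_*\equiv 0$ from the first step, and $\nu_*(t,x)=\overline\nu\,\sgn(\partial_x v(t,x)-\widetilde P_\beta(t))$ attains the supremum in \eqref{eq:HJB5}, the caveat ``if it exists'' reflecting that this feedback may be discontinuous in $x$, so the closed-loop ODE $X_*'=\nu_*(t,X_*)+\tfrac{N'(t)}{N(t)}X_*$ need not admit a solution.

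The main obstacle is bookkeeping rather than a genuine difficulty: one has to confirm that the two added penalty terms do not spoil the structure condition \eqref{eq:uniqueLip} nor the continuity of the terminal/boundary data, both of which are immediate once attention is restricted to the bounded region $0\le x\le N(t)$, $t\in[0,T]$, where the $\mathcal C^1$ functions $g$ and $q$ are automatically Lipschitz.
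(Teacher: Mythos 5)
Your proposal is correct and follows essentially the same route as the paper, which simply states that Theorem \ref{thm:51} ``follows the same argument as Theorem \ref{thm:41}'': separation of the objective into the $b$- and $\nu$-parts, $\sup_b J_1=0$ under $\beta\ge r$, the dynamic-programming/viscosity argument for the augmented value function, and uniqueness via the structure condition \eqref{eq:uniqueLip} and \cite[Chapter II, Corollary 9.1]{FS06}. Your explicit verification that the added penalty terms preserve the Lipschitz structure on the compact region $0\le x\le N(t)$, and your reading of the boundary data via the symmetry of $q$, supply exactly the bookkeeping the paper leaves implicit.
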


\quad In general, the HJB equation \eqref{eq:HJB5} does not have a closed-form solution even when
$\ell, h$ are linear, and $g, q$ are quadratic. 
Again it requires numerical methods to solve the HJB equation, and then find the optimal strategy $\nu_{*}$.
Nevertheless, there is one exception where the participant is only concerned with the risk entailed by the stakes.
The objective is to solve the stake parity problem:
\begin{align}
\label{eq:obj56}
U(x):=\inf_{\nu(t)}  & J(\nu):= \int_0^{\mathcal{T}} e^{-\delta t}g\left(X(t) - \frac{N(t)}{K}\right) dt + e^{-\delta \mathcal{T} }q\left(X(\mathcal{T}) - \frac{N(\mathcal{T})}{K}\right) \\
& \mbox{ subject to } X'(t) = \nu(t) + \frac{N'(t)}{N(t)} X(t), \, X(0) = x, \tag{C0} \\
& \qquad \qquad \quad \,  b(0) = 0, \, b(t) \ge 0 \mbox{ and } 0 \le X(t) \le N(t), \tag{C2'} \\ 
& \qquad \qquad \quad  \,  |\nu(t)| \le \overline{\nu}. \tag{C3}
\end{align}
Since $g, h$ attain the minimum at $0$, 
if $x \ge N/K$,
then the participant sells at full capacity until hitting the average $N(t)/K$;
if if $x < N/K$,
then the participant purchases at full capacity until hitting the average $N(t)/K$.
We record this simple fact in the following proposition.

\begin{proposition}
\label{prop:spp}
Assume that $g, q \in \mathcal{C}^1(\mathbb{R})$
are symmetric, and increasing on $\mathbb{R}_{+}$ for the stake parity problem \eqref{eq:obj56}.
Let $\gamma_{+}(t)$ be defined by \eqref{eq:gammat}, and 
\begin{equation}
\label{eq:gammat-}
\gamma_{-}(t): = -\overline{\nu} N(t) \int_0^t \frac{ds}{N(s)} + \frac{x N(t)}{N} \quad \mbox{for } 0 \le t \le T,
\end{equation}
and
\begin{equation}
\label{eq:t+-}
t_{\pm}:= \inf \left\{t>0:  \overline{\nu} \int_0^t \frac{ds}{N(s)} =\pm \left(\frac{1}{K} - \frac{x}{N}\right) \right\}.
\end{equation}
Then, the following results hold.
\begin{enumerate}[itemsep = 3 pt]
\item[(i)]
If $x > N\left(\frac{1}{K} + \overline{\nu} \int_0^T \frac{dt}{N(t)} \right)$,
then the optimal strategy is $\nu_*(t) = - \overline{\nu}$ for all $0 \le t \le T$,
and
$U(x) = \int_0^T e^{-\delta t} g\left(\gamma_-(t) - \frac{N(t)}{K}\right) dt+ e^{-\delta T} q\left(\gamma_-(T) - \frac{N(T)}{K} \right)$.
\item[(ii)]
If $\frac{N}{K} < x \le N\left(\frac{1}{K} + \overline{\nu} \int_0^T \frac{dt}{N(t)} \right)$,
then the optimal strategy is 
\begin{equation*}
\nu_*(t)  = \left\{ \begin{array}{lcl}
-\overline{\nu} & \mbox{for } t \le t_{-}, \\
0 & \mbox{for } t > t_{-},
\end{array}\right.
\end{equation*}
and $U(x) = \int_0^{t_-} e^{-\beta t} g\left(\gamma_-(t) - \frac{N(t)}{K}\right) dt+ \frac{g(0)}{\delta}\left(e^{-\delta t_{-}} - e^{-\delta T} \right)
+ e^{-\delta T} q(0)$.
\item[(iii)]
If $N\left(\frac{1}{K} - \overline{\nu} \int_0^T \frac{dt}{N(t)} \right) \le x < \frac{N}{K}$,
then the optimal strategy is 
\begin{equation*}
\nu_*(t)  = \left\{ \begin{array}{lcl}
\overline{\nu} & \mbox{for } t \le t_{-}, \\
0 & \mbox{for } t > t_{-},
\end{array}\right.
\end{equation*}
and $U(x) = \int_0^{t_+} e^{-\beta t} g\left(\gamma_-(t) - \frac{N(t)}{K}\right) dt + \frac{g(0)}{\delta} \left(e^{-\delta t_{-}} - e^{-\delta T} \right) + e^{-\delta T} q(0)$.
\item[(iv)]
If $x < N\left(\frac{1}{K} - \overline{\nu} \int_0^T \frac{dt}{N(t)} \right)$, 
the the optimal strategy is $\nu_{*}(t) = \overline{\nu}$ for all $0 \le t \le T$, and 
$U(x) = \int_0^T e^{-\delta t} g\left(\gamma_+(t) - \frac{N(t)}{K}\right) dt+ e^{-\delta T} q\left(\gamma_+(T) - \frac{N(T)}{K} \right)$.
\end{enumerate}
\end{proposition}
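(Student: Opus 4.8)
The plan is to eliminate the state equation by the substitution $Y(t):=X(t)/N(t)-1/K$. Since $X'(t)=\nu(t)+\frac{N'(t)}{N(t)}X(t)$, a one-line computation gives $Y'(t)=\nu(t)/N(t)$ with $Y(0)=x/N-1/K$, and $X(t)-N(t)/K=N(t)Y(t)$. Because $g$ and $q$ are even, the running and terminal penalties become $g\bigl(N(t)|Y(t)|\bigr)$ and $q\bigl(N(\mathcal{T})|Y(\mathcal{T})|\bigr)$, and since $g,q$ are increasing on $\mathbb{R}_{+}$ and $N>0$, minimizing $J$ amounts to making $|Y(t)|$ as small as possible, simultaneously at every $t$, subject to the ``speed limit'' $|Y'(t)|\le\overline{\nu}/N(t)$.

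Next I would prove the pointwise lower bound. From $Y(t)=Y(0)+\int_0^t\nu(s)/N(s)\,ds$ and $|\nu|\le\overline{\nu}$ one gets, for every admissible control and every $t$,
\[
|Y(t)|\;\ge\;\Bigl(|Y(0)|-\overline{\nu}\int_0^t\tfrac{ds}{N(s)}\Bigr)\vee 0\;=:\;m(t).
\]
Conversely, the ``bang--then--coast'' strategy — set $\nu_*(t)=-\overline{\nu}\,\sgn(Y(0))$ (full throttle toward parity) until the first time $m(\cdot)$ reaches $0$, then switch to $\nu_*\equiv 0$, which by $Y'=\nu/N$ freezes $Y$ at $0$, i.e.\ pins $X(t)=N(t)/K$ — realizes $|Y_*(t)|=m(t)$ for all $t$. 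Hence $g(N(t)|Y(t)|)\ge g(N(t)|Y_*(t)|)=g(X_*(t)-N(t)/K)$ and likewise $q(N(T)|Y(T)|)\ge q(X_*(T)-N(T)/K)$, and integrating against the discount factor $e^{-\delta t}$ gives $J(\nu)\ge J(\nu_*)$; so $\nu_*$ is optimal. The four cases in the statement are exactly the four regimes determined by $\sgn(Y(0))=\sgn(x/N-1/K)$ together with whether $m(\cdot)$ vanishes inside $[0,T]$, i.e.\ whether $\overline{\nu}\int_0^T ds/N(s)$ is smaller or larger than $|x/N-1/K|$: in cases (ii)--(iii) there is a switch, at $t_-$ resp.\ $t_+$ of \eqref{eq:t+-}, while in (i) and (iv) the control is constant on $[0,T]$. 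On the throttle phase $X_*$ equals $\gamma_+$ from \eqref{eq:gammat} (when $\nu_*=\overline{\nu}$) or $\gamma_-$ from \eqref{eq:gammat-} (when $\nu_*=-\overline{\nu}$), and on the coast phase $X_*\equiv N(\cdot)/K$; substituting these into $J$ produces the stated closed forms for $U(x)$.

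The one delicate point — which I expect to be the main obstacle — is the bookkeeping around the exit time $\mathcal{T}=\inf\{t>0:X(t)=0\text{ or }N(t)\}\wedge T$. For the proposed $\nu_*$ one checks directly that $X_*(t)/N(t)=Y_*(t)+1/K$ stays in the open interval with endpoints $x/N$ and $1/K$, hence strictly between $0$ and $1$ for $0<t\le T$ (assuming $0<x<N$ and $K\ge 2$), so $\mathcal{T}_*=T$ and the formulas for $U(x)$ hold as written. For a competing control with $\mathcal{T}=T$ the comparison above applies verbatim. For one that exits at some $\mathcal{T}<T$ the same comparison still yields $\int_0^{\mathcal{T}}e^{-\delta t}g(N(t)|Y(t)|)\,dt\ge\int_0^{\mathcal{T}}e^{-\delta t}g(N(t)|Y_*(t)|)\,dt$, and such a control is locked into the extreme terminal state $|Y(\mathcal{T})|\ge 1/K$; I would then argue that the resulting terminal penalty, charged at $\mathcal{T}$ on the larger state $N(\mathcal{T})|Y(\mathcal{T})|$, dominates the residual running-plus-terminal cost that $\nu_*$ incurs on $[\mathcal{T},T]$ — this is the step where, in full generality, a short additional estimate (or a mild growth condition on $q$ relative to $g$) is needed, while everything else is the routine verification sketched above.
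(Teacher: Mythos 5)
Your proof is correct and follows essentially the same route as the paper's: a pointwise comparison argument showing that no admissible control can bring $|X(t)-N(t)/K|$ below the bang-then-coast trajectory's value at any time, combined with the monotonicity of $g,q$ on $\mathbb{R}_{+}$; your substitution $Y=X/N-1/K$ is just a cleaner packaging of the paper's comparison $X(t)\ge\gamma_{-}(t)$ (resp.\ $X(t)\le\gamma_{+}(t)$). The early-exit subtlety you flag at the end is a genuine loose end, but the paper's own proof silently assumes $\mathcal{T}=T$ for all competitors as well, so your treatment is if anything more careful than the original.
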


\begin{proof}
(i) If $x > N\left(\frac{1}{K} + \overline{\nu} \int_0^T \frac{dt}{N(t)} \right)$,
we have $\gamma_-(t) > N(t)/K$ for all $0 \le t \le T$.
By a comparison argument, we get 
$X(t) \ge \gamma_-(t)$ for all $0 \le t \le T$ given any feasible strategy $\nu(t)$.
Since $g, q$ are increasing on $\mathbb{R}_{+}$, we obtain
%\begin{multline*}
\begin{eqnarray*}
&&\int_0^T e^{-\delta t}g\left(X(t) - \frac{N(t)}{K}\right) dt + e^{-\delta T} q\left(X(T) - \frac{N(T)}{K} \right) \\
&\ge& \int_0^T e^{-\delta t}g\left(\gamma_-(t) - \frac{N(t)}{K}\right) dt + e^{-\delta T} q\left(\gamma_-(T) - \frac{N(T)}{K} \right) ,
\end{eqnarray*}
%\end{multline*}
which yields the desired result.

\quad (ii) If $\frac{N}{K} < x \le N\left(\frac{1}{K} + \overline{\nu} \int_0^T \frac{dt}{N(t)} \right)$,
we have $\gamma_-(t) > N(t)/K$ for $0 \le t < t_{-}$ and $\gamma_-(t_{-}) = N(t_{-})/K$. 
Again by the comparison argument, $X(t) \ge \gamma_{-}(t)$ for $0 \le t \le t_{-}$ given any strategy.
Thus,
\begin{align*}
& \quad \int_0^T e^{-\delta t}g\left(X(t) - \frac{N(t)}{K}\right) dt + e^{-\delta T} q\left(X(T) - \frac{N(T)}{K} \right) \\
& = \int_0^{t_{-}} e^{-\delta t}g\left(X(t) - \frac{N(t)}{K}\right) dt +  \int_{t_{-}}^T e^{-\delta t}g\left(X(t) - \frac{N(t)}{K}\right) dt + e^{-\delta T} q\left(X(T) - \frac{N(T)}{K} \right) \\
& \ge \int_0^{t_-} e^{-\delta t}g\left(\gamma_-(t) - \frac{N(t)}{K}\right) dt + g(0) \int_{t_{-}}^T e^{-\delta t} dt + e^{\delta T} g(0),
\end{align*}
which permits to conclude. 

\quad (iii) and (iv) follow the same argument as (1) and (2).
\end{proof}

%------------------------------------------------------------------------------------------------
\section{Conclusion}
\label{sc6}

\quad
We have developed in this paper a continuous-time control approach to the 
optimal trading under the PoS protocol, formulated as a  consumption-investment problem.
We present general solutions to the optimal control via dynamic programming and the HJB equations, and in the case
of linear and utility functions, close-form solutions in the form of bang-bang controls.  
Furthermore, we bring out the explicit connections between the 
rate of return in trading/holding stakes and the participant's risk-adjusted valuation of the stakes,
such that the participant's risk sensitivity is explicitly accounted for in the trading strategy.   
We have also studied a risk-control version of the consumption-investment problem, 
and for a special case, the ``stake-parity'' problem, 
we show a mean-reverting strategy is the optimal solution.

\quad 
While our focus here is entirely on an individual participant's trading strategy in a PoS protocol,
%which we formulate as a deterministic control problem.
%This is different from \cite{BBLL20, BBLL22, LRP19}, 
%where trading is considered as a stochastic game in the PoW setting. 
%One reason for this, as pointed out in \cite{BN19, Tang22}, is that
%though both the PoW and the PoS protocols rely on randomness, 
%the randomness in the PoW protocol is external to the stake;
%while that in the PoS protocol comes from the stake itself.
%We also mention the work \cite{CLW21},
%where the price of stakes is determined by the trading strategies of the participants.
%This can be regarded as an inverse procedure to our problem.
%To conclude,
it is possible to study the interactions among the participants,
and formulate the problem of trading in a PoS protocol as a game (deterministic or stochastic), and to
study issues such as equilibrium, social
welfare, and the inclusion of a trusted third party (or market maker). 
This will be our focus of a follow-up paper.

\bigskip
{\bf Acknowledgement:} 
W.\ Tang gratefully acknowledges financial support through NSF grants DMS-2113779 and DMS-2206038,
and through a start-up grant at Columbia University.
David Yao's work is part of a Columbia-CityU/HK collaborative project that is supported by InnotHK Initiative, The Government of the HKSAR and the AIFT Lab.

\bibliographystyle{abbrv}
\bibliography{unique}
\end{document}